\newtheorem{theorem}{Theorem} 
\newtheorem{definition}{Definition} 
\newtheorem{lemma}{Lemma} 
\newtheorem{remark}{Remark} 
\begin{document} 

\title{Mechanism Design With Predictions for Obnoxious Facility Location}


\author{Gabriel Istrate\thanks{University of Bucharest, email: gabrielistrate@acm.org}, 
Cosmin Bonchi\c{s}
\thanks{West University of Timi\c{s}oara}}



\maketitle 

\begin{abstract}
We study \emph{mechanism design with predictions} for the obnoxious facility location problem. We present deterministic strategyproof mechanisms that display tradeoffs between robustness and consistency on segments, squares, circles and trees. All these mechanisms are actually group strategyproof, with the exception of the case of squares, where manipulations from coalitions of two agents exist. We prove that these tradeoffs are optimal in the 1-dimensional case. 
\end{abstract}



\textbf{Keywords: } Mechanism Design With Predictions, Robustness, Consistency




\section{Introduction}

The theory of \emph{algorithms with predictions} \cite{algo-predictions,purohit2018improving, algo-predictions-webpage} is, without a doubt, one of the most exciting recent research directions in algorithmics: when supplemented by a (correct) \emph{predictor}, often based on machine learning, the newly-developed algorithms are capable of outcompeting their worst-case classical counterparts. A desirable feature of such algorithms is, of course, to perform comparably to the (worst-case) algorithms when the predictors are really bad. This requirement often results \cite{purohit2018improving} in tradeoffs between two measures of algorithm performance, 
\emph{robustness} and \emph{consistency}. A significant amount of subsequent research has followed, summarized by the algorithms with predictions webpage \cite{algo-predictions-webpage}. 

Recently, the idea of augmenting algorithms by predictions has been adapted to the game-theoretic setting of \emph{mechanism design} \cite{xu2022mechanism,agrawal2022learning,gkatzelis2022improved,balkanski2022strategyproof}: indeed, strategyproof mechanisms often yield solutions that are only approximately optimal \cite{hartline2013mechanism}. On the other hand, if the designer has access to a predictor for the desired outcome it could conceivably take advantage of this information by creating mechanisms that  lead to an improved approximation ratio, compared to their existing (worst-case) counterparts. Tradeoffs between robustness and consistency similar to the ones from \cite{purohit2018improving} apply to this setting as well. This is perhaps best exemplified by the recent results in \cite{agrawal2022learning}: the authors considered the classical problem of \emph{facility location problem} in a setting with predictions. Procaccia and Tennenholtz \cite{procaccia2013approximate} had given a 2-approximate mechanism for this problem in the setting without predictions under the utilitarian objective function, and proved it was optimal. Later on, the two-dimensional case of the problem was shown \cite{meir2019strategyproof} to have (optimal) approximation ratio $\sqrt{2}$. Among many  results, Agrawal et al. \cite{agrawal2022learning} gave a mechanism for the 1-dimensional case that was 2-robust and 1-consistent. In the two dimensional case they gave a mechanism with predictions parameterized by $c\in (0,1)$ that realizes an (optimal) $(\frac{\sqrt{2c^2+2}}{1+c},\frac{\sqrt{2c^2+2}}{1-c})$ tradeoff between robustness and consistency. 

We contribute to this research direction by studying, in the setting of mechanism design with predictions,  
another version of facility location: the \emph{obnoxious facility location} problem \cite{aziz2020facility,mei2018mechanism,ibara2012characterizing}. This is the version of facility location where the facility is, in some sense, undesirable (e.g. a garbage dump) and all agents may strive to be located as far as possible from the chosen location. 

\subsection{Contributions and Outline} 

We give deterministic mechanisms that display an identical tradeoff between consistency and robustness for obnoxious facility location on various topologies: one and two-dimensional hypercubes (Algorithm~\ref{minusone-alg}, Theorem~\ref{upper-1-2}), circles, (Algorithm~\ref{minustwo-alg}, Theorem~\ref{upper-circle}), and trees (Algorithm~\ref{minusthree-alg}, Theorem~\ref{upper-trees}). The particular case of two-dimensional hypercubes is completely new (it had not been investigated previously, even without predictions). However, in this case the mechanism that we give is strategyproof but \textbf{not} group strategyproof, being vulnerable to collusion from as little as two agents. 
We give indication that the displayed tradeoffs may be optimal by proving this claim for one-dimensional hypercubes (Theorem~\ref{thm:lb}). We give similar results in the case of agents with dual preferences \cite{zou2015facility, feigenbaum2015strategyproof}, at the expense of replacing group strategyproofness with a weaker concept previously studied (for obnoxious facility location) in \cite{oomine2017}. 

The following is an outline of the paper: In Section~\ref{sec:prelim} we review relevant notations and definitions. The proofs of some of the upper-bound results follow a common strategy: in Section~\ref{sec:ub} we give a general outline of this strategy, and then present the specific details. In Section~\ref{sec:lb} we give details about the lower bounds. We briefly discuss (Section~\ref{sec:related}) related work and (Section~\ref{sec:conclusions}) further directions and open problems. 

\section{Preliminaries}
\label{sec:prelim}

We will write $[n]$ as a substitute for $\{1,2,\ldots, n\}$. As usual, the expectation of a random variable Z will be denoted by $\mathrm{E}[Z]$. 
 
We will be dealing with the \emph{obnoxious facility location problem}. In this problem $n$ agents are located in a metric space $(X,d)$. The location of each agent ($p_i$ for agent $1\leq i\leq n$) is private to the agent. We will refer to elements of $X^{n}$ as \emph{location profiles}. Let $\mathcal{C}=(p_1,p_2,\ldots, p_n)\in X^{n}$ be the vector of (true) agent locations. The goal is to choose a single facility location $f(\mathcal{C})\in X$, via a \emph{mechanism} $f:X^{n}\rightarrow X$.\footnote{As defined the mechanism is deterministic. To model randomized mechanisms we modify the definition of $f$ to $f:X^{n}\times \Omega \rightarrow X$, where $\Omega$ is a probability space.} Once the location has been chosen, each agent gains utlity $u_{i}(f(\mathcal{C}))=d(f(\mathcal{C}),p_i)$. Our goal is to choose a location $P^{*}$ to maximize \emph{obnoxious social welfare} $ SW(P^{*},\mathcal{C}):= \sum_{i=1}^{n} d(P^{*},p_i)$. We will denote this optimal value by $OPT(\mathcal{C})$ and use notation $y^{*}$ for a point realizing it. Given mechanism $f$ and instance $\mathcal{C}$, we will use notation $F(\mathcal{C})$ as a shorthand for $SW(f(\mathcal{C}),\mathcal{C})$. For randomized mechanisms this changes to $F(\mathcal{C})$ for $\mathrm{E}_{\omega \in \Omega}[SW(f(\mathcal{C},\omega),\mathcal{C})]$. We will measure the performance of a mechanism $f$ by the ratio $c(f):= max_{\mathcal{C}\in X^{n}} (\frac{OPT(\mathcal{C})}{F(\mathcal{C})})$. 
 Agents may be incentivized to misreport their location in order to maximize their profit. Given location profile $P$ and set $S\subset [n]$ , define $P_{-S}$ to be the profile obtained by eliminating elements $p_i$, $i\in S$, from $P$. Mechanism $f$ is called \emph{strategyproof} if for every $1\leq i\leq n$ and $p^{\prime}_{i}\neq p_i$, $u_i(f(p_{-i},p^{\prime}_i))\leq u_i(f(P)).$ In other words, it never pays off for agent $i$ to misreport its location as $p^{\prime}_i$. The mechanism is called \emph{group strategyproof} iff for every $S\subset [n]$ and every location profile $(P_{-S},Q_{S})$, there exists an index $i\in S$ such that $u_i(f(P_{-S},Q_S))\leq u_i(f(P)).$ In other words, agents in $S$ cannot all gain by coordinating their deviations from their true locations. 

Let $Z\subseteq X$ be a set of acceptable \emph{prediction values}. 
A \emph{mechanism with predictions is a function} $f:X^{n}\times Z\rightarrow X$ (or $f:X^{n}\times \Omega \times Z\rightarrow X$, in the randomized case). It employs a \emph{predictor} $\widehat{Y}\in Z$, the predicted location of the obnoxious facility, as its last argument. Denote by $\eta=d(P^{*},\widehat{Y})$ the prediction error. We will use the notation $c(f,\eta)$ to refer to the adaptation of $c(f)$ to settings with prediction $(\mathcal{C},P_{C})$ for which the condition $d(y^{*},P_{\mathcal{C}})\leq \eta$ holds. That is, given $\mathcal{C}\in X^{n}$ and $P_{\mathcal{C}}\in Z$, use shorthand $F(\mathcal{C},P_{\mathcal{C}})=SW(f(\mathcal{C},P_{\mathcal{C}}),\mathcal{C})$ and define 
\begin{align} 
c(f,\eta):= max_{\stackrel{\mathcal{C}\in X^{n},P_{C}\in Z}{d(P_{C},y^*)\leq \eta}} \big(\frac{OPT(\mathcal{C})}{F(\mathcal{C},P_{\mathcal{C}})}\big). 
\end{align}

\begin{definition} Given $\gamma,\beta\geq 1$, a mechanism with predictions $f$ is \emph{$\gamma$-robust} iff $c(f,\eta)\leq \gamma$ for all  $\eta$ and is \emph{$\beta$-consistent} if $c(f,0)\leq \beta$. 
\end{definition} 

Following \cite{schummer2002strategy}, we will view graphs as closed, connected subsets of some Euclidean space $\mathbb{R}^{n}$. They are composed of a finite number of segments called edges. We will use, in particular, \emph{trees}. Distances between points in a tree are not computed in the ambient $\mathbb{R}^n$, but rather along tree paths, by summing up the (euclidian) lengths of segments forming the path. Since there is an unique path $[a, b]$ between any two points $a,b$ in a tree one can naturally define, for  arbitrary $\lambda \in [0,1]$ the point $\lambda a+(1-\lambda)b$ as the unique point $w$ on this path such that $d(a,w)=\lambda d(a,b)$. In particular we will denote by $m_{a,b}$ the point $1/2\cdot a+1/2 \cdot b$, the \emph{midpoint of path $ab$.} Vertices $a,b$ in a tree $T$ form a \emph{diameter of $T$} if the distance $d(a,b)$ is the largest distance between any two vertices in the tree. A vertex is called \emph{peripheral} iff it is part of a diameter of the tree. 

Finally, we will refer in the sequel to \emph{majority voting}, used to choose between two items. We will used a slightly generalized version by allowing nonintegral numbers of votes for the items. 

\section{Results}

As in \cite{purohit2018improving}, our mechanisms will be parameterized by $\lambda \in [0,1]$, which intuitively measures the "agresiveness" of relying on the predictor. Our guarantees for consistency/robustness will be functions of $\lambda$ as well. When $\lambda = 1$ we will get the case of the baseline algorithm which simply returns the predicted point. In general this algorithm is 1-consistent but \textbf{not} $\gamma$-robust for any constant $\gamma$. Indeed, let $P$,$Q$ be points in $X$ that are farthest apart. Choose the $n$ points to be all at the same location $P$ and let the predictor predict also $P\in X$. The optimal obnoxious social welfare would be $n\cdot d(P,Q)$, corresponding to locating the facility at $Q$, but the bad predictor leads to an obnoxious social welfare of 0. 
We will plot the consistency/robustness curves against parameter $\lambda\in [0,1]$. By the previous discussion, at $\lambda=1$ $\beta(1)=1$ and $\gamma(1)=\infty.$ On the other hand at $\lambda = 0$ we expect the quantities $\beta(0)$ and $\gamma(0)$ to coincide, and be equal to the approximation ratio of the optimal strategyproof mechanism. Moreover, as we increase $\lambda$ from $0$ to $1$ we expect the robustness $\gamma(\lambda)$ to diverge, while the consistency $\beta(\lambda)$ to tend towards 1. Our goal is to obtain explicit curves $(\beta(\lambda),\gamma(\lambda))$ that display "the best" tradeoff between robustness and consistency. 

The first setting for our results is the one where $X=[0,1]^{k}$, the $k$-dimensional hypercube. 
We will employ the "robust coordinatewise voting" mechanism displayed in Algorithm~\ref{minusone-alg}. We have: 

\begin{algorithm}
    \SetKwInOut{Input}{Input}
    \SetKwInOut{Output}{Prediction}

    \Input{$C=(x_1,x_2,\ldots, x_n), x_i\in [0,1]^k$, $\lambda\in [0,1)$.}
    \Output{$P_{\mathcal{C}}=p_1p_2\ldots p_k\in \{0,1\}^{k}$. W.l.o.g. $P_{\mathcal C}=1^{k}$.} 
       for i:=1 to k \\
       \hspace{5mm}let $n_{1,k}$ be the number of indices $i$ s.t. $0\leq x_{i,k}\leq 1/2$. \\
       \hspace{5mm}let $n_{2,k}$ be the number of indices $i$ s.t. $1/2< x_{i,k}\leq 1$. \\
       \hspace{5mm}count $n_{1,k}$ votes for 1; count $n_{2,k}$ votes for 0. \\
       \hspace{5mm}\textbf{count $\lambda n$ votes for $p_i$.}\\
       \hspace{5mm}if $p_i$ has at least as many votes as $1-p_i$: \\
       \hspace{10mm}let $z_i=p_i$.  \\
       \hspace{5mm}else \\
       \hspace{10mm}let $z_i=1-p_i$. \\
       return $Z_{\mathcal{C}}:=z_1z_2\ldots z_k.$ \\
    \caption{$k$-dimensional robust coordinatewise voting.}
    \label{minusone-alg}
\end{algorithm}

\begin{figure}
\begin{center} 
\includegraphics[width=5cm,height=4cm]{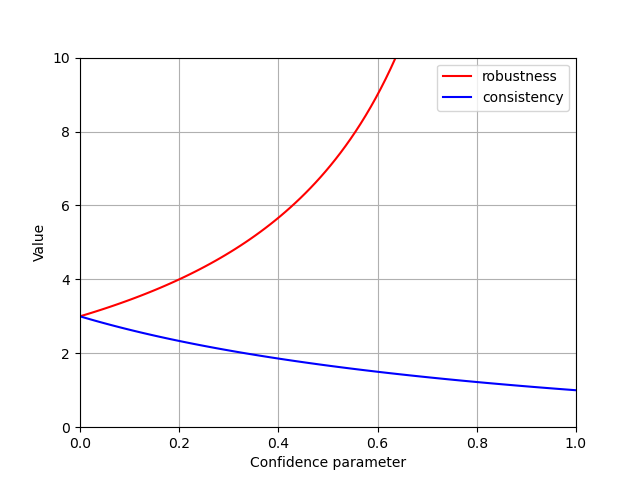}
\begin{tikzpicture}[
roundnode/.style={circle, draw=green!60, fill=green!5, very thick, minimum size=7mm},
squarednode/.style={rectangle, draw=red!60, fill=red!5, very thick, minimum size=5mm},
]
\draw (2,0) -- (0,0);
\draw (0,2) -- (0,0);
\draw (2,0) -- (2,2);
\draw (0,2) -- (2,2); 
\draw (1,0) -- (1,2);
\draw (0,1) -- (2,1);
\draw (0.5,0.5) node{$n_2$};
\draw (1.5,0.5) node{$n_3$};
\draw (1.5,1.5) node{$n_4$};
\draw (0.5,1.5) node{$n_1$};
\filldraw[black] (2,2) circle (2pt) node[anchor=south]{D};
\filldraw[black] (1,1) circle (2pt) node[anchor=west]{R};
\filldraw[black] (1,2) circle (2pt) node[anchor=south]{T};
\filldraw[black] (2,1) circle (2pt) node[anchor=west]{S};
\filldraw[black] (0,1) circle (2pt) node[anchor=east]{U};
\filldraw[black] (1,0) circle (2pt) node[anchor=north]{V};
\filldraw[black] (2,0) circle (2pt) node[anchor=west]{A};
\filldraw[black] (0,0) circle (2pt) node[anchor=east]{B};
\filldraw[black] (0,2) circle (2pt) node[anchor=south]{C};
\end{tikzpicture}
\end{center} 
\caption{ (a). Consistency versus robustness. (b). Decomposition of the square into rectangles used in the case of two-dimensional hypercubes.} 
 \label{plot1}
\end{figure} 

\begin{theorem} 
The robust majority mechanism in one dimension with parameter $\lambda$ (see Algorithm~\ref{plot1}) is group strategyproof, $\frac{3+\lambda}{1-\lambda}$-robust and $\frac{3-\lambda}{1+\lambda}$-consistent. The robust majority mechanism in \textbf{two} dimensions with parameter $\lambda$ is $\frac{3+\lambda}{1-\lambda}$-robust and $\frac{3-\lambda}{1+\lambda}$-consistent (see Figure~\ref{plot1} (a)). It is strategyproof but \textbf{not} group strategyproof: there exists a coalition of two agents that can manipulate it. 
\label{upper-1-2}
\end{theorem}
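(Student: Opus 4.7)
I begin with the one-dimensional case; take $P_{\mathcal{C}}=1$ without loss of generality. Since $SW(\cdot,\mathcal C)$ is convex on $[0,1]$, we have $OPT(\mathcal C)=\max(S,n-S)$ with $S=\sum_i p_i$, and the mechanism's output lies in $\{0,1\}$. Partition the agents by whether $p_i\leq 1/2$ (count $n_1$, partial sum $S_1\leq n_1/2$) or $p_i>1/2$ (count $n_2$, partial sum $S_2>n_2/2$). Each agent strictly prefers the endpoint farther from $p_i$, which is precisely the endpoint they vote for truthfully. Group strategyproofness then follows: any coalitional deviation that flips the output from $e$ to $1-e$ needs voting mass to move from $e$ to $1-e$, and only coalition members who truthfully voted for $e$ (hence preferred $e$) can do that, so at least one member of the coalition is made strictly worse off.

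For the 1D ratios, the only non-trivial case for robustness is output $1$ with $OPT$ at $0$. The voting rule $n_1+\lambda n\geq n_2$ rewrites as $n_2\leq\frac{1+\lambda}{1-\lambda}\,n_1$; combined with $F(\mathcal C)=n-S\geq n_1/2$ and $OPT(\mathcal C)=S\leq n_1/2+n_2$, this gives
\[
\frac{OPT(\mathcal C)}{F(\mathcal C)}\leq 1+\frac{2 n_2}{n_1}\leq \frac{3+\lambda}{1-\lambda}.
\]
Consistency (the case $\eta=0$, where the prediction coincides with the optimum, so the only bad case is output $0$) is symmetric and yields $\frac{3-\lambda}{1+\lambda}$ via $F\geq n_2/2$, $OPT\leq n_1+n_2/2$, and $n_1\leq\frac{1-\lambda}{1+\lambda}\,n_2$.

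The two-dimensional case runs the 1D rule on each coordinate independently. Strategyproofness is inherited coordinate-by-coordinate: a single-agent misreport can only flip an output coordinate in which the agent changes halves, and such a flip moves that coordinate of the facility into the agent's own half, strictly decreasing the distance from agent to facility — a utility loss in the obnoxious setting. The consistency and robustness bounds come from applying the 1D analysis separately to the two coordinate voting subproblems and then combining them via the mediant inequality $a/b,\,c/d\leq R\Rightarrow (a+c)/(b+d)\leq R$; this exploits that both the mechanism's output and the welfare optimum lie at hypercube vertices, on which the welfare decomposes into per-coordinate contributions each controlled by its own 1D ratio.

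Finally, the failure of group strategyproofness in 2D must be witnessed by an explicit two-agent construction: I would place two agents in diagonally opposite quadrants relative to $(1/2,1/2)$, both close to one of the coordinate medians, and choose background agents and $\lambda$ so that (i) the truthful output is some corner $C^t$ and (ii) a joint misreport by the two — each changing its half in exactly one coordinate — simultaneously flips both output coordinates, sending the facility to a corner strictly farther from each coalition member. The delicate step is arranging positions so that both members \emph{strictly} gain (rather than merely being indifferent), which is impossible for any single deviator by the coordinate-wise 1D strategyproofness already established; a coordinated pair bypasses this barrier by routing their two flipping budgets onto the two coordinates.
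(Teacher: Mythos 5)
Your one-dimensional analysis is correct and is essentially the paper's argument in streamlined form: instead of constructing an auxiliary worst-case configuration and verifying a domination inequality ($\Delta\geq 0$), you bound $OPT$ and $F$ directly via $S_1\leq n_1/2$, $S_2> n_2/2$ and the voting threshold, which yields the same $\frac{3+\lambda}{1-\lambda}$ and $\frac{3-\lambda}{1+\lambda}$ bounds; your group-strategyproofness argument (a flip from $e$ to $1-e$ forces some coalition member who weakly preferred $e$ to switch its vote) also matches the paper's. Your coordinate-wise strategyproofness argument in two dimensions is likewise sound, and arguably cleaner than the paper's corner-by-corner case analysis: each flipped output coordinate requires the deviator to have truthfully voted for the flipped-away value, so $|p_j-z_j|$ weakly decreases in every flipped coordinate and the Euclidean distance cannot increase.

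The genuine gap is in the two-dimensional robustness and consistency bounds. Your plan decomposes the welfare at hypercube vertices into per-coordinate contributions and combines the two 1D ratios by the mediant inequality, but this decomposition is false for the Euclidean metric the paper uses: the distance from an agent at $p$ to a corner $z$ is $\sqrt{(p_1-z_1)^2+(p_2-z_2)^2}$, not $|p_1-z_1|+|p_2-z_2|$, so $SW(\mathcal{C},z)$ is not the sum of the two coordinate-wise 1D welfares (the paper's own computations involve the distances $\sqrt{5}/2$ and $\sqrt{2}/2$ from the edge midpoints and the center to the corners, which have no per-coordinate splitting). Norm equivalence $\|v\|_2\leq\|v\|_1\leq\sqrt{2}\,\|v\|_2$ would only rescue a bound of $\sqrt{2}\cdot\frac{3+\lambda}{1-\lambda}$, strictly worse than claimed. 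The paper instead runs a genuinely two-dimensional argument: it reduces to a bad configuration supported on $\{T,R,S,D\}$ (respectively $\{U,B,V,R\}$ for consistency), exploits the inequality $n_2+\lambda n\geq n_4$ extracted from the two coordinate votes jointly, and proves the domination inequality $\Delta\geq 0$ by splitting the agents into eight triangles and applying $d(x,D)-d(x,B)\geq d(T,D)-d(T,B)$ on each. Separately, your two-agent manipulation is only a plan, not a construction; the paper's explicit instance (three agents at $T$, three at $S$, one at $R$, one at $D$, with one agent from each of $T$ and $S$ misreporting as $D$, moving the output from $D$ to $B$) shows the strict gain you flag as the delicate step, since $d(T,B)=d(S,B)=\sqrt{5}/2>1/2=d(T,D)=d(S,D)$.
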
 

An analogous result holds in the case of circles. We will regard a circle $\mathcal{C}$ as the segment $[0,1]$ where 0 and 1 are identified. The distance between two points with coordinates $0\leq x < y\leq 1$ will be the distance between the points as measured on the circle. That is: $d(x,y)=min(y-x,1+x-y)$. We will employ the mechanism in Algorithm~\ref{minustwo-alg}, and prove that: 

\begin{theorem} 
The robust majority mechanism on a circle with parameter $\lambda$ is group strategyproof, $\frac{3+\lambda}{1+\lambda}$-robust and $\frac{3-\lambda}{1+\lambda}$-consistent. 
\label{upper-circle} 
\end{theorem}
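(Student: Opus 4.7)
The plan is to follow the common strategy from Section~\ref{sec:ub}, adapted to the circle topology. The crucial new structural ingredient is the identity $SW(y,\mathcal{C}) + SW(\bar y,\mathcal{C}) = n/2$ for any antipodal pair $y,\bar y$ on the unit-perimeter circle, which follows because each agent's distances to $y$ and to $\bar y$ sum to $1/2$. This yields the bounds $OPT(\mathcal{C}) \leq n/2$ and $\max(SW(P_{\mathcal{C}},\mathcal{C}), SW(\bar P_{\mathcal{C}},\mathcal{C})) \geq n/4$, neither of which has an analogue in the one-dimensional case; these are what keep the robustness ratio $\frac{3+\lambda}{1+\lambda}$ bounded (and in fact strictly smaller than the 1-dimensional $\frac{3+\lambda}{1-\lambda}$) even as $\lambda$ approaches $1$.

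Group strategyproofness is established by the same argument used in the one-dimensional part of Theorem~\ref{upper-1-2}. Algorithm~\ref{minustwo-alg} reduces to a weighted vote between the prediction $P_{\mathcal{C}}$ and its antipode $\bar P_{\mathcal{C}}$, biased by $\lambda n$ votes in favor of $P_{\mathcal{C}}$; each agent's sincere vote goes to whichever of the two candidates is farther from its true location. A single agent who deviates can only shift support to the closer (less preferred) candidate, strictly decreasing its own utility. For a coalition, any member whose vote must shift away from the current winner in order to change the outcome strictly loses utility, so no joint deviation can strictly benefit all coalition members.

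The robustness and consistency bounds are obtained by a case analysis on the winning candidate. Let $L = SW(P_{\mathcal{C}},\mathcal{C})$, $R = SW(\bar P_{\mathcal{C}},\mathcal{C})$, and let $n_1,n_2$ denote the number of agents closer to $P_{\mathcal{C}}$ and to $\bar P_{\mathcal{C}}$ respectively. The voting condition translates into an inequality of the form $n_1 \leq n_2 + \lambda n$ (or its reverse), and together with $L+R = n/2$ it yields a concrete lower bound on $F(\mathcal{C},P_{\mathcal{C}})$ in terms of $n_1, n_2$. Upper bounding $OPT(\mathcal{C})$ by the maximum of the piecewise-linear function $y\mapsto SW(y,\mathcal{C})$ over its finitely many kink points (the agent locations and their antipodes), and then optimizing over configurations satisfying the voting inequality, gives the claimed robustness ratio $\frac{3+\lambda}{1+\lambda}$. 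Consistency follows from the same case analysis under the additional constraint $L = OPT(\mathcal{C})$, and the extremal analysis runs exactly as in the 1-dimensional case to yield the identical bound $\frac{3-\lambda}{1+\lambda}$.

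The main obstacle is the extremal analysis for robustness. Unlike on $[0,1]$, the optimum on the circle need not be attained at either $P_{\mathcal{C}}$ or $\bar P_{\mathcal{C}}$: the adversary can cluster agents near distance $1/4$ from $P_{\mathcal{C}}$ and realize $OPT$ at some intermediate antipodal pair, in which case neither voting candidate matches the optimum. Identifying the worst-case configuration and verifying that the circle's structural identities ($L+R = n/2$ together with $OPT(\mathcal{C})\leq n/2$) really do tighten the ratio from the 1-dimensional $\frac{3+\lambda}{1-\lambda}$ down to $\frac{3+\lambda}{1+\lambda}$ will require careful bookkeeping of $SW$ at the kink points of the piecewise-linear landscape.
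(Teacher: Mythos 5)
Your high-level plan (reduce the mechanism to a weighted vote between $P_{\mathcal{C}}$ and its antipode $Q$, exploit the antipodal identity $d(x,P_{\mathcal{C}})+d(x,Q)=1/2$, and run an extremal analysis over the kink points) has the right skeleton, and your group-strategyproofness argument matches the paper's. But the load-bearing claim of your robustness argument is false. You assert that $SW(y,\mathcal{C})+SW(\bar y,\mathcal{C})=n/2$ keeps the ratio at $\frac{3+\lambda}{1+\lambda}$, ``strictly smaller than the one-dimensional $\frac{3+\lambda}{1-\lambda}$, even as $\lambda$ approaches $1$.'' The identity only gives $\max\bigl(SW(P_{\mathcal{C}},\mathcal{C}),SW(Q,\mathcal{C})\bigr)\geq n/4$; the mechanism may well select whichever of the two has the \emph{smaller} welfare, so this yields no lower bound on $F(\mathcal{C},P_{\mathcal{C}})$. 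Concretely, place $n_P\approx n\frac{1+\lambda}{2}$ agents at $P_{\mathcal{C}}$ itself and the remaining $n_Q\approx n\frac{1-\lambda}{2}$ agents just beyond distance $1/4$ from $P_{\mathcal{C}}$. Then $n_P\leq n_Q+\lambda n$, so Algorithm~\ref{minustwo-alg} returns $P_{\mathcal{C}}$ with welfare $\approx n_Q/4$, while locating the facility at $Q$ gives welfare $\approx n_P/2+n_Q/4$; the ratio tends to $1+2\frac{1+\lambda}{1-\lambda}=\frac{3+\lambda}{1-\lambda}$, which exceeds $\frac{3+\lambda}{1+\lambda}$ for every $\lambda\in(0,1)$ and diverges as $\lambda\to 1$. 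No amount of bookkeeping at the kink points can rescue the bound you are aiming for.

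This is also exactly where your narrative diverges from what the paper actually proves: in the branch $f(\mathcal{C},P_{\mathcal{C}})=P_{\mathcal{C}}$ the paper bounds the ratio by $\max\bigl(3,1+2\frac{1+\lambda}{1-\lambda}\bigr)=\frac{3+\lambda}{1-\lambda}$ --- the same constant as in one dimension --- and only the branch $f(\mathcal{C},P_{\mathcal{C}})=Q$ yields $\frac{3+\lambda}{1+\lambda}$. So the uniformly improved constant you set out to derive is not what the proof delivers, and your proposed route to it cannot work. Separately, the step you flag as ``the main obstacle'' --- handling configurations whose optimum sits at an intermediate antipodal pair rather than at $P_{\mathcal{C}}$ or $Q$ --- is precisely where the paper invests its effort, via the surrogate configuration $\mathcal{C}_1$ and the inequality $\Delta\geq 0$ of Section~\ref{sec:ub}; your proposal defers this step rather than carrying it out, so even the weaker (correct) robustness bound is not yet established by your outline. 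The consistency bound $\frac{3-\lambda}{1+\lambda}$ is the one place where the antipodal identity genuinely does the work you want, and there your sketch agrees with the paper.
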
 

\begin{algorithm}
    \SetKwInOut{Input}{Input}
    \SetKwInOut{Output}{Prediction}

    \Input{$\mathcal{C}=(x_1,x_2,\ldots, x_n)$, $\lambda\in [0,1]$.}
    \Output{$P_{\mathcal{C}}\in [0,1)$.}
       Let $S=\{x_{1},\ldots, x_{n}\}$. \\
       Let $Q$ be the point diametrally opposed to $P_{\mathcal{C}}$. \\
       Let $n_Q$ be the number of points of $S$ in $(P_{\mathcal{C}}+1/4,P_{\mathcal{C}}+3/4]$ \\ and $n_P$ the number of points of $S$ 
       in $(P_{\mathcal{C}}-1/4,P_{\mathcal{C}}+1/4]$. \\
       If $n_P\leq n_Q+\lambda n$ return point $P_{\mathcal{C}}$\\
       else return point $Q$.  
    \caption{The robust majority mechanism on a circle.}
    \label{minustwo-alg}
\end{algorithm}

\begin{algorithm}
    \SetKwInOut{Input}{Input}
    \SetKwInOut{Output}{Prediction}

    \Input{$x_1,x_2,\ldots, x_n\in T$, $\lambda\in [0,1]$.}
    \Output{$P_{\mathcal{C}}\in T,$ a vertex of $T.$ }
       Let $S=\{x_{1},\ldots, x_{n}\}$. Transform tree $T$ to $T^{\prime}$ by making all points of $S$ vertices, while preserving distances. \\
       Let $a$ be the farthest vertex from $P_{\mathcal{C}}$. 
       Let $b$ be a vertex of $T^{\prime}$ farthest from $a$ ($P_{\mathcal{C}}$ if possible)  \\ 
       Let $T_a$ be the connected component of $T^\prime$ determined by $m_{a,b}$ containing $a$, and $T_b$ be the complement of $T_a$ in $T^{\prime}$. \\
       Let $n_1$ be the number of points of $S$ in $T_a$ and $n_2$ the number of points 
       in $T_b$. \\
       If $n_1+\lambda n \leq n_2$ return point $a$ \\
       else return point $b$.  
    \caption{The robust majority mechanism on a tree.}
    \label{minusthree-alg}
\end{algorithm}

Finally, we propose a version of the robust majority mechanism with parameter $\lambda$ for trees, displayed in Figure~\ref{minusthree-alg}. We have: 

\begin{theorem} 
Under the hypothesis that the predicted point is always a peripherical point, the robust majority mechanism on trees with parameter $\lambda$ is (also) group strategyproof, $\frac{3+\lambda}{1+\lambda}$-robust and $\frac{3-\lambda}{1+\lambda}$-consistent. 
\label{upper-trees} 
\end{theorem}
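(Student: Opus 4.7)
My plan is to adapt the common strategy used for hypercubes and circles (Section~\ref{sec:ub}) to the tree setting. The first step is to exploit the peripherality hypothesis: if $P_{\mathcal{C}}$ lies on some diameter of $T$, then the two-pass farthest-vertex construction inside Algorithm~\ref{minusthree-alg} returns a pair $(a,b)$ with $d(a,b)=D:=\mathrm{diam}(T)$ and, via the stated tie-break, $b=P_{\mathcal{C}}$; hence $(a,b)$ is itself a diameter of $T$. This is exactly what lets the two candidate outputs behave like the two antipodal candidates on the circle.

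The key geometric lemma I would prove is a \emph{projection identity}. For each agent $p_i$, let $c_i$ denote the (unique) closest point to $p_i$ on the path $[a,b]$ and $r_i:=d(c_i,p_i)$ its lateral component. Then $d(a,p_i)+d(b,p_i)=D+2r_i$, and $p_i\in T_a$ iff $d(a,c_i)\leq D/2$. Summing gives $SW(a)+SW(b)=nD+2R$ with $R:=\sum_i r_i$; and because each agent in $T_b$ contributes at least $D/2+r_i$ to $SW(a)$ (and symmetrically), one obtains the one-sided lower bounds $SW(a)\geq n_2 D/2+R$ and $SW(b)\geq n_1 D/2+R$. These, paired with the vote condition (which translates into $n_1\leq n(1-\lambda)/2$ or $n_2\leq n(1+\lambda)/2$ depending on which point is returned), are the engine behind both approximation guarantees. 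Consistency amounts to bounding $SW(b)/SW(a)$ when $y^*=b$ and the algorithm returns $a$: writing $SW(b)=nD-A+R$ with $A:=\sum_i d(a,c_i)\geq n_2 D/2$ and dividing by the lower bound on $SW(a)$, the supremum $(3-\lambda)/(1+\lambda)$ is attained as $R\to 0$ and the counts saturate the threshold.

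For robustness the argument is parallel but requires bounding $OPT=SW(y^*)$ for an arbitrary $y^*$. The plan is to reduce to $y^*\in\{a,b\}$, using that $SW$ is convex on a tree (so its maximiser is at a leaf) and bounding off-diameter contributions via the same projection identity applied to $y^*$, namely $d(y^*,a)+d(y^*,b)=D+2r_{y^*}$ with $r_{y^*}\leq D/2$. A symmetric case analysis on which of $a,b$ is returned, combined with the one-sided lower bounds above, then yields the robustness ratio. The main obstacle I anticipate is making this ``off-diameter'' accounting tight: a crude triangle bound on $d(y^*,p_i)$ via $a$ or $b$ overstates $OPT$ by roughly a factor of two; the fix is to route each agent's contribution to $OPT$ through whichever endpoint lies on the \emph{same} side of $m_{a,b}$ as the agent, so that the $r_i$-terms cancel and the vote condition's asymmetry absorbs the remaining slack.

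For group strategyproofness, observe that $a$ and $b$ depend only on $T$ and $P_{\mathcal{C}}$: promoting agent positions to vertices of $T'$ does not change which vertices of $T$ are farthest from $P_{\mathcal{C}}$. Hence a coalition $S$ can alter the output only by flipping the vote inequality $n_1+\lambda n$ versus $n_2$, and such a flip necessarily involves some $i\in S$ whose true position lies on one side of $m_{a,b}$ but whose report lies on the other. That agent trades a utility $\geq D/2$ for one $\leq D/2$ and weakly loses, ruling out unanimous strict improvement inside the coalition.
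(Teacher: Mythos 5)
Your proposal is correct in substance but takes a genuinely different route from the paper's. The paper runs the surrogate-configuration machinery of Section~\ref{sec:ub}: it replaces $\mathcal{C}$ by a configuration $\mathcal{C}_1$ with $n_1$ points at $m_{a,b}$ and $n_2$ points at $b$ (or symmetrically), bounds $SW(\mathcal{C}_1,y^{*})\leq n_1 d(a,b)/2+n_2 d(a,b)$ against $F(\mathcal{C}_1,P_{\mathcal{C}})=n_1 d(a,b)/2$, and then justifies the transfer from $\mathcal{C}$ to $\mathcal{C}_1$ by proving $\Delta\geq 0$ through a case split over $T_a$ and $T_b$ and triangle inequalities. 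You instead work directly with the true configuration via the projection decomposition $d(v,p_i)=r_i+d(v,c_i)$ onto the diameter path, which gives the identity $SW(a)+SW(b)=nD+2R$ and the one-sided bounds $SW(a)\geq n_2D/2+R$, $SW(b)\geq n_1D/2+R$; combined with the vote thresholds this yields both ratios without ever introducing a surrogate configuration or a $\Delta$. Your consistency computation ($SW(b)/SW(a)=(R+nD-A)/(R+A)$ with $A\geq n_2D/2$, worst case at $R=0$ and a saturated threshold) is in fact cleaner and more self-contained than the paper's, which for consistency simply points back to an inequality established inside the robustness argument. Both proofs rest on the same two facts: that peripherality of $P_{\mathcal{C}}$ forces $(a,b)$ to be a diameter with $b=P_{\mathcal{C}}$, and that agents in $T_a$ (resp.\ $T_b$) weakly prefer the output $b$ (resp.\ $a$), which is also the engine of group strategyproofness in both treatments.

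Two spots need repair. First, your claim that convexity lets you ``reduce to $y^{*}\in\{a,b\}$'' is false as stated: convexity of $SW$ along tree paths only places the maximiser at a leaf, and the paper's Remark~\ref{rem1} explicitly notes that the optimum need not be a peripheral point, let alone one of $a,b$. The reduction is unnecessary, however: your own projection identity applied to $y^{*}$ gives $d(y^{*},m_{a,b})\leq D/2$, and routing every agent through $m_{a,b}$ (rather than through the ``same-side endpoint,'' which is lossier) yields $OPT\leq SW(b)+n_2D$ when the output is $b$, from which $OPT/SW(b)\leq 1+2n_2/n_1$ and the robustness bound follow. Second, in the group-strategyproofness step the comparison ``trades a utility $\geq D/2$ for one $\leq D/2$'' is not literally right, since both utilities carry the additive term $r_i$; the correct statement is that an agent truly in $T_b$ satisfies $d(p_i,a)=r_i+d(a,c_i)\geq r_i+d(b,c_i)=d(p_i,b)$, so it cannot strictly gain from a flip it must participate in. Neither issue affects the viability of your approach.
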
 

\begin{remark} 
Even though the algorithm from Theorem~\ref{upper-trees} displays the same tradeoff as the ones from theorems~\ref{upper-1-2} and~\ref{upper-circle}, it has an important drawback, not present in the other cases: the predicted point $P_{\mathcal{C}}$ is not necessarily among the possible outputs of the algorithm. Indeed (see Lemma~\ref{lemma1} below) in the case of 1-d and 2-d hypercubes the distance is a convex function, hence the optimum was guaranteed \emph{a priori} to be one of the points 0/1 ($A,B,C,D$ in the two-dimensional case). On the other hand in the case of circles, we specifically designed the algorithm so that $P_{\mathcal{C}}$ is one of the potential outputs. To make the algorithm behave similarly to the previous algorithms, we were forced to add the condition that $P_{\mathcal{C}}$ is peripheral to the specification of the predictor: while the robustness bound carries on without this assumption, we don't know how to obtain the consistency bound without it. 

The problem is that there are examples (see e.g. in Exercise 3.4 \cite{abueidacentrality}) where the optimal point $y^{*}$ need \textbf{not} be a peripheral point. In light of this fact, while it is natural to require that the prediction point $P_{\mathcal{C}}$ be a leaf of the tree, since optima are always reached at a leaf \cite{church1978locating},  it seems substantially less natural  to require that $P_{\mathcal{C}}$ is peripheral. One can compute the optimum point in linear time \cite{ting1984linear}. See also \cite{oomine2016characterizing} for a characterization of possible outputs of group strategyproof mechanisms. 
\label{rem1}
\end{remark}

The tradeoff we displayed in Theorems~\ref{upper-1-2} is (at least in the one-dimensional case) optimal. Indeed, we prove: 

\begin{theorem} 
For every $0<c\leq 2$ and $\delta >0$, a deterministic mechanism with predictions for obnoxious facility location on a segment that always yields one of the endpoints and  
is $(1+c)$-consistent cannot be $(1+\frac{4}{c}-\delta)$-robust.  The mechanism in Algorithm~\ref{minusone-alg} realizes the best possible tradeoff, with substitution $\lambda = \frac{2-c}{2+c}$ (i.e. $c=\frac{2(1-\lambda)}{1+\lambda}$). 
\label{thm:lb}
\end{theorem}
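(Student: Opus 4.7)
The plan is to argue by contradiction, isolating a single threshold parameter attached to the mechanism and then showing that the consistency and robustness requirements impose mutually incompatible bounds on it for $n$ large. Since a generic deterministic mechanism that is not required to be strategyproof could simply output $y^*(\mathcal{C})$ and achieve $1$-consistency and $1$-robustness, the theorem implicitly relies on strategyproofness, the standing convention of the paper. Fixing the prediction $P=1$ by left-right symmetry, I would first justify that any strategyproof endpoint-valued mechanism on $[0,1]$ is, up to tie-breaking, an anonymous threshold voting rule: $f(\mathcal{C},1)=0$ iff $n_R \geq k$, where $n_R$ counts the agents reporting positions in $(1/2,1]$ and $k=k(1,n)$ is an integer threshold. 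The intuition is that if the mechanism used any finer position information, a right-side agent could shift their report toward $1$ (staying on the same side, hence not flipping their preferred endpoint) to tilt the count toward their preferred output.

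Next I would exhibit two tight instance families. For consistency, take $\mathcal{C}_A$ with $k$ agents at $1/2+\epsilon$ and $n-k$ agents at $0$: here $n_R=k$, so $f$ outputs $0$, while $\sum x_i \to k/2 < n/2$ forces $y^*=1$ (so $\eta = 0$) and the approximation ratio tends to $2n/k-1$. Hence consistency demands $k \geq 2n/(2+c)$. For robustness, take $\mathcal{C}_B$ with $k-1$ agents at $1$ and $n-k+1$ agents at $1/2$: now $n_R=k-1$, so $f$ outputs $1$, while $\sum x_i = (n+k-1)/2 > n/2$ gives $y^*=0$ (so $\eta=1$) and the ratio equals $(n+k-1)/(n-k+1)$. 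Writing $R := 1+4/c-\delta$, robustness demands $k \leq n(R-1)/(R+1) + 1$.

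The contradiction follows from the algebraic identity
\[
\frac{2}{2+c} - \frac{R-1}{R+1} \;=\; \frac{c\delta}{(2+c)(R+1)},
\]
whose right-hand side is strictly positive whenever $c,\delta>0$. Consequently, once $n > (2+c)(R+1)/(c\delta)$ the consistency lower bound $2n/(2+c)$ strictly exceeds the robustness upper bound $n(R-1)/(R+1)+1$, so no integer $k$ satisfies both, as required. For the matching claim, I would verify directly from Theorem~\ref{upper-1-2} that with $\lambda=(2-c)/(2+c)$ one has $(3-\lambda)/(1+\lambda) = 1+c$ and $(3+\lambda)/(1-\lambda) = 1+4/c$, realizing the Pareto frontier exactly.

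The main obstacle I anticipate is the first step, namely the reduction from an arbitrary strategyproof endpoint-valued mechanism to an anonymous threshold rule. For non-anonymous strategyproof mechanisms the cleanest argument is to observe that the adversary can place many indistinguishable agents, so on the two critical profiles above the mechanism's output is determined by counts only. If one prefers to avoid a global characterization, a local deviation argument also works: in each of $\mathcal{C}_A$ and $\mathcal{C}_B$, one varies a single marginal agent's report across $1/2$ and applies strategyproofness to that agent alone, which suffices to pin down $f$ on the two profiles actually used in the calculation.
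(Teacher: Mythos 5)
Your proposal follows essentially the same route as the paper: reduce, via strategyproofness, to threshold voting rules; use a consistency-critical instance (mass just across $1/2$ forcing the output away from the correct endpoint) to lower-bound the threshold; then use a wrong-prediction instance to show the robustness ratio is forced up to $1+\frac{4}{c}$. Your two instances and the closing algebraic identity are correct, and the verification of the matching upper bound via $\lambda=\frac{2-c}{2+c}$ is exactly the paper's. The paper differs only in presentation: it handles the threshold reduction by citing the known characterization of strategyproof onto-$\{0,1\}$ mechanisms as $T_k^{0,1}$ (Han--Du / Ibara--Nagamochi) rather than proving it, it works with both restrictions $f_0,f_1$, and it perturbs the robustness instance by $\epsilon$ so the contradiction holds for every $n$ rather than only for $n>\frac{(2+c)(R+1)}{c\delta}$ as in your version.

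The one substantive weak point is your fallback for the reduction step. Strategyproofness here only yields a one-directional constraint: an agent who already receives her preferred endpoint is never restrained from flipping the outcome by deviating, so "varying a single marginal agent's report across $1/2$" on the two profiles $\mathcal{C}_A$ and $\mathcal{C}_B$ does not pin down the mechanism's values on both — those profiles differ in the reports of many agents, and linking them requires the global monotone-threshold structure (obtained by iterating single-agent moves: if $f=1$, any agent in $(1/2,1]$ can be relocated arbitrarily without changing the output, and symmetrically for $0$). So you should either cite the characterization, as the paper does, or carry out that multi-step argument; the local two-profile shortcut as described would not close the proof.
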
 
\subsection{The case of dual preferences} 

In \cite{zou2015facility} and \cite{feigenbaum2015strategyproof} yet another twist on the obnoxious facility location was proposed: the case of \emph{dual preferences}. Specifically, agents are of two types: those of type 1 prefer to be as close to the facility as possible, while those of type 0 still want to maximize distance to the facility. Denoting by $T_1, T_0$ the sets of agents of type 1, type 0, respectively, one defines the social welfare of a given choice $Y$ as $SW(P,Y)=\sum_{i\in T_1} (1-d(Y,x_i)) + \sum_{j\in T_0} d(Y,x_j)$. In this case, an agent $i$ may misreport \emph{both} its type (we will denote by $y_i$) and location $x_i$. However, it was noted that the problem is related to the obnoxious facility location as follows: define for each agent $i$ its \emph{transformed location $x^{*}_{i}$} as follows: $x^{*}_{i}=1-x_i$ if $y_i=1$, $x^{*}_{i}=x_i$, otherwise. Zou and Li prove that applying the majority voting mechanism on the transformed location profile yields a group-strategyproof, 3-approximation algorithm.\footnote{The proof does \textbf{not} follow by reduction but needs a direct argument, since in general $1-d(Y,x_i)\neq d(Y,x^{*}_{i})$. However (\cite{zou2015facility}, Lemma 4) $d(Y,x_i)+ d(Y,x^{*}_{i})\leq 1$.} One can similarly extend our Theorem~\ref{minusone-alg}. Interestingly enough, to obtain this we \textbf{don't} extend Algorithm~\ref{minusone-alg}, but need to devise a new mechanism. On the other hand we also have to give up group strategyproofness, and replace it with the following weaker concept: 

\begin{definition} \cite{oomine2017} Given constant $\gamma\geq 1$, a mechanism $f$ is called \emph{$\gamma$-group strategyproof} iff for every $S\subset [n]$ and every location profile $(P_{-S},Q_{S})$, there exists an index $i\in S$ such that $u_i(f(P_{-S},Q_S))\leq \gamma\cdot u_i(f(P),p_i).$ In other words, agents in $S$ cannot all improve their utilities by more that a multiplicative factor of $\gamma$ by coordinating their deviations from their true locations. 
\end{definition} 

\begin{theorem} The transformed robust majority mechanism (Algorithm~\ref{alephzero-alg}) is $\frac{1+\lambda}{1-\lambda}$-group strategyproof, $\frac{3+\lambda}{1-\lambda}$-robust and $\frac{3-\lambda}{1+\lambda}$-consistent. 
\label{upper-dual}
\end{theorem}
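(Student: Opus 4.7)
The plan is to proceed in three stages.

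First, I would establish that Algorithm~\ref{alephzero-alg} always returns an endpoint $Y\in\{0,1\}$, and that for such endpoint outputs every agent's \emph{true} dual-preference utility at $Y$ coincides with the obnoxious utility of its \emph{true} transformed location $x^*_i$. Indeed, if $y_i=1$ then $1-d(Y,x_i)=d(Y,1-x_i)=d(Y,x^*_i)$ for $Y\in\{0,1\}$; if $y_i=0$, the identity $d(Y,x_i)=d(Y,x^*_i)$ is immediate. Hence $SW_{\text{dual}}(Y)=SW_{\text{obnoxious}}(Y,x^*)$ on every output of the mechanism.

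Second, for consistency and robustness, I would mirror the worst-case analysis used in Theorem~\ref{upper-1-2}, but applied to the transformed profile $x^*$. The subtle point is that $OPT_{\text{dual}}$ may strictly exceed $OPT_{\text{obnoxious on }x^*}$, so a black-box reduction to Theorem~\ref{upper-1-2} would lose a constant factor. To avoid this loss I would combine (i) Zou and Li's pointwise identity $d(Y,x_i)+d(Y,x^*_i)\le 1$ from the footnote after the statement, (ii) the budget identity $SW_{\text{obnoxious}}(0,x^*)+SW_{\text{obnoxious}}(1,x^*)=n$, and (iii) the specific vote-margin condition enforced by the $\lambda n$ prediction bonus. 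Splitting agents according to which endpoint their $x^*_i$ votes for and tracking their contributions to $SW_{\text{dual}}$ separately on the winning and losing side then yields the claimed $\frac{3+\lambda}{1-\lambda}$-robustness and $\frac{3-\lambda}{1+\lambda}$-consistency, parallel to the obnoxious proof of Theorem~\ref{upper-1-2}.

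Third, and this is the main technical obstacle, I would prove $\frac{1+\lambda}{1-\lambda}$-group strategyproofness. Because Algorithm~\ref{alephzero-alg} is not merely Algorithm~\ref{minusone-alg} applied to transformed locations, ordinary group strategyproofness does not transfer. A coalition $S$ can jointly manipulate both type $y_i$ and location $x_i$, thereby choosing any reported transformed location $x^{*\prime}_i\in[0,1]$, while utilities are evaluated against the \emph{true} $x^*_i$. For every profitable deviation that flips the output from $Y$ to $Y'=1-Y$, I would identify a witness $i\in S$ with $d(Y,x^*_i)\ge (1-\lambda)/2$: to overturn the $\lambda n$ prediction bonus plus the preexisting vote margin, the coalition must contain at least one member whose true transformed location lies sufficiently close to the original winner $Y$. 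For such a witness,
\[
\frac{u_i(f(P_{-S},Q_S))}{u_i(f(P))}=\frac{d(Y',x^*_i)}{d(Y,x^*_i)}=\frac{1-d(Y,x^*_i)}{d(Y,x^*_i)}\le \frac{(1+\lambda)/2}{(1-\lambda)/2}=\frac{1+\lambda}{1-\lambda},
\]
which is exactly the claimed bound; the single-agent case ($|S|=1$) reduces to ordinary strategyproofness and must be checked separately, directly from the algorithm's voting rule.

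The main obstacle is this last step. The combinatorial bookkeeping needed to exhibit the witness in every profitable coalition deviation, and to show that no deviation (including those that mix manipulations of type and of location across coalition members) pushes the utility ratio beyond $\frac{1+\lambda}{1-\lambda}$, is where the proof really has to work; the $\frac{1+\lambda}{1-\lambda}$ factor emerges as exactly the worst-case ratio between the post-flip gain of an over-committed member and the loss incurred to buy the flip against the $\lambda n$ prediction bias.
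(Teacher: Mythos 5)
Your opening identity is correct and important: for $Y\in\{0,1\}$ one has $1-d(Y,x_i)=d(Y,x^*_i)$, so on every output of the mechanism the dual social welfare equals the obnoxious welfare of the transformed profile; this is essentially what makes the paper's Lemma~\ref{lemma:two} work. Your group-strategyproofness witness is also the right one: any coalition that flips the outcome from $Y$ to $1-Y$ must contain a member whose true transformed location lies in the \emph{losing} voting interval, hence at distance at least $\frac{1-\lambda}{2}$ from $Y$ (note: \emph{far} from the original winner, not ``close to'' it as you write), and that member's utility ratio is at most $\frac{1-\frac{1-\lambda}{2}}{\frac{1-\lambda}{2}}=\frac{1+\lambda}{1-\lambda}$. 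Two slips there: Algorithm~\ref{alephzero-alg} has no ``$\lambda n$ prediction bonus'' --- it shifts the threshold to $\frac{1-\lambda}{2}$ --- and the mechanism is \emph{not} ordinarily strategyproof for $\lambda>0$ (a single agent with $x^*_i\in(\frac{1-\lambda}{2},\frac12)$ can profitably flip the output), so the $|S|=1$ case does not ``reduce to ordinary strategyproofness''; it is precisely where the factor $\frac{1+\lambda}{1-\lambda}$ is needed.

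The genuine gap is your second stage. The difficulty is the numerator $OPT$, which in the dual model is attained at an \emph{interior} point (for an all-type-1 profile the objective $\sum_i(1-|Y-x_i|)$ is maximized at the median), while your three ingredients --- the Zou--Li inequality, the budget identity $SW(0,x^*)+SW(1,x^*)=n$, and the vote-margin condition --- only control endpoint welfares. Their natural instantiation gives $OPT\le n$ and a lower bound of order $\frac{n(1-\lambda)}{4}$ on the winning endpoint's welfare, hence robustness about $\frac{4}{1-\lambda}$ and consistency no better than $\frac{3+\lambda}{1-\lambda}$, both strictly weaker than the claimed $\frac{3+\lambda}{1-\lambda}$ and $\frac{3-\lambda}{1+\lambda}$. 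The paper closes this with two steps you are missing: first, Lemma~\ref{lemma:two} replaces each type-0 agent by a type-1 agent with the same transformed location (by Zou--Li this only raises $SW$ pointwise while leaving the output and its welfare unchanged), reducing to all-type-1 profiles; second, on such profiles it invokes the median characterization of the optimum and a pairing argument, matching $x_r$ with $x_{n+1-r}$ and bounding each pair's contribution by $\frac{1+x_r}{1-x_r}$ together with the threshold bound on $x_r$ --- this is where the constants $3\pm\lambda$ actually come from. For consistency one further needs the structural consequence of $y^*=P_{\mathcal C}\in\{0,1\}$ that the median sits at that endpoint, forcing half the agents to be located exactly there; this fact is invisible to an endpoint-only accounting, so your plan as stated cannot reach the stated consistency bound.
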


\begin{algorithm}
    \SetKwInOut{Input}{Input}
    \SetKwInOut{Output}{Prediction}

    \Input{$C=(x_1,y_1,x_2,y_2,\ldots, x_n,y_n)$, \\ $x_i\in [0,1]$, $y_i\in \{0,1\}$, $\lambda\in [0,1)$.}
    \Output{$P_{\mathcal{C}}=p_1\in \{0,1\}$. w.l.o.g. assume $P_{\mathcal C}=1$.} 
       \hspace{5mm} define \emph{transformed locations} $x^{*}_i=x_i(1-y_i)+(1-x_i)y_i$.\\
       \hspace{5mm}let $n_{1}$ be the number of indices $i$ s.t. $0\leq x^{*}_{i}\leq (1-\lambda)/2$. \\
       \hspace{5mm}let $n_{2}$ be the number of indices $i$ s.t. $(1-\lambda)/2< x^{*}_{i}\leq 1$. \\
       \hspace{5mm}if $n_1> n_2$ return $Z_{\mathcal{C}}:=p_1(=1)$ \\
       \hspace{5mm}else return $Z_{\mathcal{C}}:=1-p_1(=0)$. \\
    \caption{Transformed Robust Majority.}
    \label{alephzero-alg}
\end{algorithm}

\section{Proof Idea for Theorems~\ref{upper-1-2}, ~\ref{upper-circle} and~\ref{upper-trees}}
\label{sec:ub} 

The proofs of Theorems~\ref{upper-1-2},~\ref{upper-circle} and~\ref{upper-trees} proceed by first investigating group strategyproofness directly, and then \textbf{follow a similar pattern to upperbound robustness/consistency:} for an input configuration $\mathcal{C}$ and prediction $P_{\mathcal{C}}$ we want to upper bound the ratio $app(\mathcal{C},P_{\mathcal{C}}):= \frac{OPT(\mathcal{C})}{F(\mathcal{C},P_{\mathcal{C}})}$ by a constant $\gamma(\lambda)$ ($\beta(\lambda)$ for consistency). To do so we will identify a "bad" configuration\footnote{we emphasize $\mathcal{C}_1$ will \textbf{not} necessarily be of the same type as $\mathcal{C}$. That's why we will write $SW(\mathcal{C}_1, f(\mathcal{C},P_{\mathcal{C}}))$ instead of $F(\mathcal{C}_1,P_{\mathcal{C}})$ in inequalities of~(\ref{ub-robustness}).} $\mathcal{C}_1$ and prove that: 
\begin{align} 
\frac{OPT(\mathcal{C})}{F(\mathcal{C},P_{\mathcal{C}})}\leq \frac{SW(\mathcal{C}_1,y^{*})}{SW(\mathcal{C}_1, f(\mathcal{C},P_{\mathcal{C}}))}\leq \frac{OPT(\mathcal{C}_1)}{SW(\mathcal{C}_1, f(\mathcal{C},P_{\mathcal{C}}))}\leq \gamma(\lambda)
\label{ub-robustness}
\end{align} 
where $y^{*}$ is the point realizing the optimum for configuration $\mathcal{C}$
(we replace $\gamma(\lambda)$ by $\beta(\lambda)$ in the inequalities above in consistency proofs). 
The last inequality in~(\ref{ub-robustness}) will typically follow easily by computing the optimal solutions for $\mathcal{C}_1$. The second is clear by the definition of $OPT$. As for the first inequality, since $OPT(\mathcal{C})\geq F(\mathcal{C},P_{\mathcal{C}})$, we will attempt to bound the left-hand side by  
\begin{align*} 
\frac{OPT(\mathcal{C})}{F(\mathcal{C},P_{\mathcal{C}})}\leq \frac{OPT(\mathcal{C})-(F(\mathcal{C},P_{\mathcal{C}})- SW(\mathcal{C}_1, f(\mathcal{C},P_{\mathcal{C}}))}{F(\mathcal{C},P_{\mathcal{C}})-(F(\mathcal{C},P_{\mathcal{C}})- SW(\mathcal{C}_1,f(\mathcal{C},P_{\mathcal{C}}))}
\end{align*}
To make the inequality work we first need to prove that $F(\mathcal{C},P_{\mathcal{C}})- SW(\mathcal{C}_1, f(\mathcal{C},P_{\mathcal{C}}))\geq 0.$ This generally follows automatically by requiring that points of $\mathcal{C}_1$ are translates of the corresponding points of $\mathcal{C}$ towards the chosen location $f(\mathcal{C},P_{\mathcal{C}})$.  
 To make the denominator of the right-hand side less or equal to $SW(\mathcal{C}_1,y^{*})$, given that $OPT(\mathcal{C})=SW(\mathcal{C},y^{*})$,  we will also need that $\Delta:=F(\mathcal{C},P_{\mathcal{C}})-SW(\mathcal{C}_1,f(\mathcal{C},P_{\mathcal{C}}))- SW(\mathcal{C},y^{*})+SW(\mathcal{C}_1,y^{*})\geq 0.$ This inequality will be proved separately in each case. Proofs may employ triangle inequalities for the metric space in question. 

\section{Proof of Theorem~\ref{upper-1-2}}

\subsection{One-dimensional hypercubes (segments)}

\textbf{Robustness:} Consider an arbitrary configuration $\mathcal{C}$ and without loss of generality assume that $P_{\mathcal{C}}=1$ (if this is not the case, simply switch labels 1 and 0). Let $n=(n_1,n_2)$ be the type of $\mathcal{C}$. 
Let $y^{*}$ be the point realizing the optimum social welfare for $\mathcal{C}$. Assume first  that $n_1+ \lambda\cdot n \geq n_2$ (i.e. Algorithm~\ref{minusone-alg} outputs $P_{\mathcal{C}}$). The above condition is equivalent to $n_2/n_1\leq \frac{1+\lambda}{1-\lambda}$ or $n_2/n\leq \frac{2}{1-\lambda}$.  Define $\mathcal{C}_1$ to consist of  $n_1$ points at 0.5 and $n_2$ points at 1. $\mathcal{C}_1$ has the same type as $\mathcal{C}$, and the points in $\mathcal{C}_1$ are at least as close to $P_{\mathcal{C}}$ as their counterparts in $\mathcal{C}$.  The best we could do for $\mathcal{C}_1$ is choose\footnote{0 has social welfare $n_2+n_1/2$, 1 has social welfare $n_1/2$.} $y^{*}=0$, with social welfare $n_2+n_1/2$. The social welfare of the solution of the modified majority mechanism on input $\mathcal{C}_1$ is, on the other hand, $n_{1}/2$. So the approximation ratio will be upperbounded by
\begin{align}
\frac{OPT(\mathcal{C}_1)}{SW(\mathcal{C}_1,f(\mathcal{C},P_{\mathcal{C}}))}= \frac{\frac{n_1}{2}+n_2}{\frac{n_1}{2}}\leq 1+\frac{2(1+\lambda)}{1-\lambda}=\frac{3+\lambda}{1-\lambda} 
\end{align}
\noindent \textbf{Proving $\mathbf{\Delta \geq 0:}$} $\Delta = \sum_{i=1}^{n} |x_i-1| - n_{1}/2 -\sum_{i=1}^{n} |y^{*}-x_i| +  $
\begin{align*} 
 + n_1\cdot |1/2-y^{*}|+ n_{2}\cdot |1-y^{*}| = \sum_{i:x_i\leq 1/2} (1-x_i - 1/2+ |1/2-y^{*}| \\
 -|y^{*}-x_i|) +  \sum_{i: x_i>1/2} (|1-x_i|+ |y^{*}-1|-|y^{*}-x_i|)\geq 0
\end{align*} 
by triangle inequalities $|x_i-1/2|+|1/2-y^{*}|\geq |y^{*}-x_i|$ and similar ones for the second sum.  

In the case $n_1+\lambda n<n_2$ (i.e. Algorithm~\ref{minusone-alg} outputs $0$) take $\mathcal{C}_1$ to consist of $n_1$ points at 0 and $n_2$ points at 1/2. 
\begin{align}
\frac{OPT(\mathcal{C}_1)}{SW(\mathcal{C}_1,f(\mathcal{C},P_{\mathcal{C}}))}= \frac{\frac{n_2}{2}+n_1}{\frac{n_2}{2}}
< 1+ 2\frac{1-\lambda}{1+\lambda}= \frac{3-\lambda}{1+\lambda}\leq \frac{3+\lambda}{1-\lambda}
\end{align} 
\textbf{Proving $\mathbf{\Delta \geq 0:}$} $\Delta = \sum_{i=1}^{n} |x_i-0| - n_{2}/2 -\sum_{i=1}^{n} |y^{*}-x_i| +  $
\begin{align*}
 \sum_{x_i\leq 1/2} |0-y^{*}| + \sum_{x_i>1/2} |1/2-y^{*}| = \sum_{x_i\leq 1/2} (|x_i-0|+  |0-y^{*}|  - \\ - |x_i-y^{*}|)+ \sum_{x_i >1/2} (|x_i-1/2|+ |1/2-y^{*}| - |x_i-y^{*}|)\geq 0 
\end{align*}

\noindent\textbf{Consistency:} Without loss of generality, we assume that the predicted point (which also realizes the optimum) is $P_{\mathcal{C}}=1$. We will also assume that the mechanism~\ref{minusone-alg} yields the opposite of the predicted point, otherwise the consistency ratio is 1.  That is $P_{\mathcal{C}}=y^{*}=1$ but $Z_{\mathcal{C}}=0$. Given our assumptions on the predictor and algorithm, the only possible case is $n_1+\lambda n< n_2$.  We have
\begin{align}
\frac{OPT(\mathcal{C})}{F(\mathcal{C},P_{\mathcal{C}})}= \frac{\sum\limits_{i\leq n} (1-x_i)}{\sum\limits_{i\leq n} x_i}\leq  \frac{2n}{n_2}-1<\frac{4}{1+\lambda}-1= \frac{3-\lambda}{1+\lambda} 
\end{align}
(we have used the inequality $n_2/n > \frac{1+\lambda}{2}$, which follows easily by taking $n_1=n-n_2$, as well as $\sum_i x_i\geq n_2/2$). 
 
\textbf{Group Strategyproofness:} Assume that for the truthful location profile it holds that $n_1+\lambda \cdot n \geq n_2$ (thus Algorithm~\ref{minusone-alg} returns point 1). Let $S\subset [n]$ be a set of agents. If the Algorithm also returns 1 on the modified profile then no agent has improved their utility. Therefore we will assume that agents in $S$ colluding leads to changing the outcome of Algorithm~\ref{minusone-alg} to 0. If $S$ contains an agent in $[0,1/2]$ then the utility of this agent has not improved by colluding. Therefore, the only case left is that all agents in $S$ are in $(1/2,1]$. But then reporting a different location in $(1/2,1]$ does not improve utility, and reporting a location in $[0,1/2]$ may actually decrease it.  The case $n_1+\lambda \cdot n < n_2$ is similar. 

\subsection{Two-dimensional hypercubes (squares)} 
First we note the following: 
\begin{lemma} 
For every configuration $\mathcal{C}$, $max_{Y}(f(\mathcal{C},Y))$ is obtained for $y^{*}\in \{A,B,C,D\}$, the corners of the square in Figure~\ref{plot1}.  
\label{lemma1}
\end{lemma}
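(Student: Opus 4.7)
The plan is to prove this by a direct convex-analytic argument: the obnoxious social welfare, viewed as a function of the facility location, is a sum of Euclidean distance functions, hence convex; and a convex function on a compact convex set attains its maximum at an extreme point. Since the extreme points of the square $[0,1]^2$ are exactly the four corners $A,B,C,D$, the claim follows.

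Concretely, I would proceed in three short steps. First, fix an agent position $x_i \in [0,1]^2$ and observe that $y \mapsto d(y,x_i) = \|y - x_i\|_2$ is convex on $\mathbb{R}^2$: for any $\lambda \in [0,1]$ and $y,y' \in \mathbb{R}^2$, the triangle inequality yields
\[
d(\lambda y + (1-\lambda) y',\, x_i) \leq \lambda\, d(y,x_i) + (1-\lambda)\, d(y',x_i).
\]
Summing over $i \in [n]$ preserves convexity, so $SW(\cdot,\mathcal{C}) = \sum_{i=1}^n d(\cdot,x_i)$ is convex in its first argument.

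Second, I would invoke the maximum principle for convex functions on a compact convex set, which in our setting is elementary: any $y = (y_1,y_2) \in [0,1]^2$ admits the bilinear convex-combination decomposition
\[
y = y_1 y_2 \cdot D + y_1(1-y_2)\cdot A + (1-y_1)y_2 \cdot C + (1-y_1)(1-y_2)\cdot B,
\]
with nonnegative coefficients summing to $1$. Applying convexity of $SW(\cdot,\mathcal{C})$ to this decomposition gives
\[
SW(y,\mathcal{C}) \;\leq\; \sum_{v \in \{A,B,C,D\}} \lambda_v\, SW(v,\mathcal{C}) \;\leq\; \max_{v \in \{A,B,C,D\}} SW(v,\mathcal{C}),
\]
where the $\lambda_v$ are the four weights above. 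Third, combining with the fact that the corners are themselves elements of $[0,1]^2$, one obtains
\[
\max_{y \in [0,1]^2} SW(y,\mathcal{C}) \;=\; \max_{v \in \{A,B,C,D\}} SW(v,\mathcal{C}),
\]
so some $y^* \in \{A,B,C,D\}$ realizes the optimum, as claimed.

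There is no real obstacle here: the lemma is a textbook consequence of the convexity of Euclidean distance. The only thing to be careful about is writing an explicit convex combination of a point of the square in terms of its four corners, so that the argument does not rely on external references to Bauer's maximum principle. Note also that this argument uses only convexity of the metric, so it extends verbatim to $[0,1]^k$ (the extrema occur at $\{0,1\}^k$), justifying the implicit use of the same fact in higher-dimensional versions of Algorithm~\ref{minusone-alg}.
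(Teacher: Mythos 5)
Your proof is correct and takes the same route as the paper, which simply cites the convexity of the distance-sum (Property 2.1 of Morris) and concludes that the maximum over the square is attained at a corner. You merely make the argument self-contained by verifying convexity from the triangle inequality and by writing out the explicit bilinear convex combination of a point of $[0,1]^2$ in terms of $A,B,C,D$, both of which check out.
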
 
\textbf{Proof. } A simple consequence of the fact that $f$ is a convex function  (Property 2.1, \cite{morris1988facilities}). $\Box$

\textbf{Robustness: } In the two-dimensional case we will use the decomposition in Figure~\ref{plot1} (b). Without loss of generality we assume that $f(\mathcal{C})=D$, otherwise we could transform the $x$ and $y$ coordinates separately, if needed, to make this statement true. Let $n_1,n_2,\ldots, n_4$ be respectively the number of points of $x$ in each of the squares\footnote{There is some ambiguity as to how to assign points on the boundary to a single rule. We will use the following rule: whenever we have to assign boundary points we will prefer the triangle "with lower coordinates", where coordinates are considered in the order $x_1,x_2,\ldots, x_k$. For instance points on segment $[B,R]$ could be counted  either by $n_2$ or $n_3$. We will assign them to $n_2$ since triangle BRV is "to the right of triangle BRU" with respect to first coordinate $x_1$. Similarly, points on [VR) will be counted by $n_3$ rather than $n_4$, points on segment $(RS]$ by $n_5$, rather than $n_6$ and so on.}. We call four-tuple $n=(n_1,n_2,n_3,n_4)$ \emph{the type} of configuration $\mathcal{C}$. 

Consider configuration $\mathcal{C}_1$ consisting of $n_1$ points in $T$, $n_2$ points in $R$, $n_3$ points in $S$, $n_4$ points in $D$. Clearly $\mathcal{C}_1$ has type $n$ as well, thus 
$F(\mathcal{C}_1,P_{\mathcal{C}})=(n_1+n_3)\frac{1}{2}+n_2\frac{\sqrt{2}}{2}$. We claim that
 \begin{align}
 \frac{OPT(\mathcal{C}_1)}{F(\mathcal{C}_1,P_{\mathcal{C}})}= \frac{(n_1+n_3)\sqrt{5}+(n_2+2n_4)\sqrt{2}}{(n_1+n_3)+n_2\sqrt{2}}\leq \frac{3+\lambda}{1-\lambda}
 \label{foo}
 \end{align}
 To prove this we first need to show that $OPT(\mathcal{C}_1)=SW(\mathcal{C}_1,D)$.  
 First of all, it holds that $SW(\mathcal{C}_1,C)\leq SW(\mathcal{C}_1,B)$: points $R,S$ are equally close to $B,C$, while $T,D$ are closer to $C$ than to $B$. Similarly, $SW(\mathcal{C}_1,A)\leq SW(\mathcal{C}_1,B)$. Finally $SW(\mathcal{C}_1,D)\leq SW(\mathcal{C}_1,B)$: $R$ is equally close to $B,D$, while $T,S,D$ are closer to $D$ than to $B$. Since $f(\mathcal{C}_1)=D$, we have $n_1+n_2+\lambda n \geq n_3+n_4$ and $n_2+n_3+\lambda n\geq n_1+n_4$. In particular, we infer $n_2+\lambda n\geq n_4$, i.e. 
 $n_{2}(1+\lambda)-n_{4}(1-\lambda)\geq -\lambda(n_1+n_{3})$. The last inequality in~(\ref{foo}) is equivalent to: 
\begin{align*}
(n_1+n_3)(3+\lambda - \sqrt{5}(1-\lambda))+ n_2(\sqrt{2}(3+\lambda)-\sqrt{2}(1-\lambda))-\\ - 2\sqrt{2}n_4 (1-\lambda)\geq 0,  \mbox{ i.e. } \\
(n_1+n_3)(3-\sqrt{5}+\lambda(1+\sqrt{5}))+ 2\sqrt{2}[n_2(1+\lambda)- n_4 (1-\lambda)]\geq 0.   
\label{foo2} 
\end{align*} 
Since $0\leq \lambda \leq 1$, it follows that $3- \sqrt{5} + \lambda(1+\sqrt{5}-2\sqrt{2})\geq 3-\sqrt{5}-(1+\sqrt{5}-2\sqrt{2})=2(1+\sqrt{2}>\sqrt{5})>0$. We infer that~(\ref{foo2}) (hence the last inequality in~(\ref{foo})) must be true. 

 \noindent\textbf{Proving $\mathbf{\Delta \geq 0}$:} $\Delta=F(\mathcal{C},P_{\mathcal{C}})-SW(\mathcal{C}_1,f(\mathcal{C},P_{\mathcal{C}})) - OPT(\mathcal{C})+$
  \begin{align*}
 +  SW(\mathcal{C}_{1},y^{*}) = \sum_{i=1}^{n} d(x_i,D)- \frac{n_{1}+n_{3}}{2}-n_{2}\frac{\sqrt{2}}{2} +n_{1}d(T,y^{*})+ \\ + n_{2}d(R,y^{*})+ n_{3}d(S,y^{*}) + n_{4}d(D,y^{*}) 
  \end{align*} 
 We know that $y^{*}\in \{A,B,C,D\}$. There are four cases. We treat here the case $y^{*}=B$ (the other ones are similar). Hence 
 \begin{align} 
 \Delta=\sum_{i=1}^{n} (d(x_i,D) - d(x_i,B)) + (n_{1}+n_{3})\frac{\sqrt{5}-1}{2} + n_4\sqrt{2}
 \end{align} 
 We will break $\Delta$ into eight sums, corresponding to points $x_i$ being in exactly one of the triangles $ASR, AVR, BVR, BUR, CUR, CTR$, $DTR, DSR$. First, if $x_{i}$ is in one of  $AVR,BVR,BUR,CUR$ then $d(x_{i},D)$\\ $\geq d(x_{i},B)$, so these four sums are nonnegative. Let us show that the remaining sum is $\geq 0$ as well, proving this way that $\Delta\geq 0$. 
 
 Let us consider, for instance, the points $x_i$ that belong to the triangle $CTR$. For every such $x_i$ we have $d(x_{i},D)-d(x_{i},B)\geq d(T,D)-d(T,B)= \frac{1-\sqrt{5}}{2}$, so $d(x_{i},D)-d(x_{i},B)+\frac{\sqrt{5}-1}{2}\geq 0$. Since the number of such points is at most $n_1$ and $\sqrt{5}>1$,
 \begin{align}
 \sum_{i: x_i\in CTR} (d(x_i,D) - d(x_i,B)) + n_{1}\frac{\sqrt{5}-1}{2}\geq 0. 
 \end{align}
 The cases of sums corresponding to points in triangles $ARS,DSR$, $DTR$ are similar. Adding all the eight inequalities proves that $\Delta \geq 0$. 
 
\noindent \textbf{Consistency:} Again without loss of generality, we assume that the predicted point (which also realizes the optimum) is $P_{\mathcal{C}}=D$. We will also assume that the mechanism~\ref{minustwo-alg} yields one of $A,B,C$, otherwise the consistency ratio is 1. We deal here with the case where the algorithm outputs $Z_{\mathcal{C}}=B$, the others are similar. Given our assumptions on the predictor and algorithm, the only possible case is $n_1+n_2+\lambda n< n_3+n_4$ and $n_2+n_3+\lambda n< n_1+n_4$.  So we have $n_2+\lambda n < n_4$, i.e. $n_4(1-\lambda)-n_2(1+\lambda) >\lambda(n_1+n_3) $ (**). Consider configuration $\mathcal{C}_{2}$ consisting of $n_1$ points at $U$, $n_2$ points at $B$, $n_3$ points at $V$, $n_4$ points at $R$. To prove this we first need to show that $OPT(\mathcal{C}_2)=SW(\mathcal{C}_2,D)$.  
 First, it holds that $SW(\mathcal{C}_2,C)< SW(\mathcal{C}_2,D)$: points $R,V$ are equally close to $C,D$, while $U,V$ are closer to $C$ than to $B$. Similarly, $SW(\mathcal{C}_2,A)< SW(\mathcal{C}_2,D)$. Finally $SW(\mathcal{C}_2,B)< SW(\mathcal{C}_2,D)$: $R$ is equally close to $B,D$, while $U,V,B$ are closer to $B$ than to $D$. Second, $F(\mathcal{C}_2,P_{\mathcal C})=SW(\mathcal{C}_2,B)$. Thus
\begin{align}
\frac{OPT(\mathcal{C}_2)}{F(\mathcal{C}_2,P_{\mathcal C})}= \frac{(n_1+n_3)\sqrt{5}+(n_4+2n_2)\sqrt{2}}{(n_1+n_3)+n_4\sqrt{2}} 
<   \frac{3-\lambda}{1+\lambda} 
\label{foo3} 
\end{align}
Indeed, the last inequality is equivalent to $(n_1+n_3)(3-\lambda - \sqrt{5}(1+\lambda))+2\sqrt{2}[n_4(1-\lambda)-n_{2}(1+\lambda)]> 0$. Applying (**) we get 
\begin{align*}
(n_1+n_3)(3-\sqrt{5}-\lambda(1+\sqrt{5}))+ 2\sqrt{2}[n_4(1-\lambda)-n_{2}(1+\lambda)] > (n_1 \\
+n_3)[3-\sqrt{5}- \lambda(1+\sqrt{5}-2\sqrt{2})]\geq 2(n_1+n_3)[1-\sqrt{5}+\sqrt{2}]\geq 0. 
\end{align*} 
thus proving~(\ref{foo3}) 
(we have used inequalities, $1+\sqrt{5}-2\sqrt{2}>0$, $0\leq \lambda\leq 1$ and $1-\sqrt{5}+\sqrt{2}>0$). 
To complete the proof of the consistency bound all we have to prove is that 
\begin{align}
\frac{OPT(\mathcal{C})}{F(\mathcal{C},P_{\mathcal C})} \leq \frac{SW(\mathcal{C}_2,y^{*})}{F(\mathcal{C}_2,P_{\mathcal{C}})} \leq \frac{OPT(\mathcal{C}_2)}{F(\mathcal{C}_2,P_{\mathcal C})}
\label{foo4} 
\end{align} 
Similarly to the proof of robustness, to prove inequality~(\ref{foo4}) we consider quantity $\Delta_2:= F(\mathcal{C},P_{\mathcal{C}})- F(\mathcal{C}_2,P_{\mathcal{C}})-OPT(\mathcal{C})+SW(\mathcal{C}_{2},y^{*})$, and aim to prove that $\Delta_2\geq 0$. 

\noindent\textbf{Proving $\mathbf{\Delta_2 \geq 0}$:} $\Delta_2=F(\mathcal{C},P_{\mathcal{C}})-F(\mathcal{C}_2,P_{\mathcal{C}})-OPT(\mathcal{C})+SW(\mathcal{C}_{2},y^{*})$
  \begin{align*}
 = \sum_{i=1}^{n} d(x_i,B)-  \frac{n_{1}+n_{3}}{2}-n_{4}\frac{\sqrt{2}}{2} - \sum_{i=1}^{n} d(x_i,y^{*})
 + n_{1}d(U,y^{*})+ \\ + n_{2}d(B,y^{*})+ n_{3}d(V,y^{*})+n_{4}d(R,y^{*})
 \end{align*} 
 We know that $y^{*}\in \{A,B,C,D\}$. There are four cases. We treat here the case $y^{*}=D$ (the other ones are similar). Hence 
\begin{align} 
 \Delta_2=\sum_{i=1}^{n} (d(x_i,B) - d(x_i,D)) + (n_{1}+n_{3})\frac{\sqrt{5}-1}{2} + n_2\sqrt{2}
 \end{align} 
 The proof that $\Delta_2\geq 0$ uses the same idea of triangle decomposition as the corresponding proof for $\Delta \geq 0$. 

\noindent\textbf{Strategyproofness:}  Let $\mathcal{C}=(p_{1},p_{2},\ldots, p_n)$ be an arbitrary configuration, and without loss of generality assume that $f(\mathcal{C})=D$. Let $i\in [n]$ be an agent, and assume that by misreporting location as $q_i$, agents $i$ is able to strictly improve its utility. This means that in the new profile $\mathcal{C}^{\prime}$, $f(\mathcal{C}^{\prime})\neq f(\mathcal{C})$. 
There are two cases: 
\begin{itemize}[leftmargin=*]
\item $\mathbf{f(\mathcal{C}^{\prime})\in \{A,C\}.}$ These two cases are symmetric and without loss of generality we can assume that $f(\mathcal{C}^{\prime})=A$. This means that $n_1+n_2+\lambda n\geq n_3+n_4$ and $n_2+n_3+\lambda n\geq n_1+n_4$ but the corresponding inequalities for $n_1^{\prime}, n_2^{\prime}, n_3^{\prime}, n_4^{\prime}$ read 
\begin{align}
n_1^{\prime}+n_2^{\prime}+\lambda n\geq n_3^{\prime}+n_4^{\prime} \mbox{ but } n_2^{\prime}+n_3^{\prime}+\lambda n< n_1^{\prime}+n_4^{\prime}.
\end{align}
In particular $n_2+n_3\geq n\frac{1-\lambda}{2}\mbox{ but } n_2^{\prime}+n_3^{\prime}< n\frac{1-\lambda}{2}$. 
No point whose location is in the rectangle $BUSA$ gains by misreporting it ($D$ is at least as far as $A$). So i does not belong to rectangle $BUSA$. But this implies $n_2+n_3\leq n_{2}^{\prime}+n_{3}^\prime$, a contradiction. 
\item $\mathbf{f(\mathcal{C}^{\prime})=B.}$ Then $n_1+n_2+\lambda n\geq n_3+n_4$ and $n_2+n_3+\lambda n\geq n_1+n_4$, but $n_1^{\prime}+n_2^{\prime}+\lambda n< n_3^{\prime}+n_4^{\prime}$ and $n_2^{\prime}+n_3^{\prime}+\lambda n< n_1^{\prime}+n_4^{\prime}$. Hence $n_1+n_2\geq n\frac{1-\lambda}{2}>n_1^{\prime}+n_2^{\prime}$ and $n_2+n_3\geq n\frac{1-\lambda}{2}>n_2^{\prime}+n_3^{\prime}$. 

Since misreporting a single point decreases both sums $n_2+n_3$ and $n_1+n_2$, we infer the fact that agent $i$ is in the rectangle $BURV$ and $n_1+n_2=n_3+n_4$, $n_2+n_3=n_1+n_4$. But this is a contradiction: no point whose true location is in the rectangle $BURV$ gains by misreporting its location ($D$ is already their farthest point). So $f$ is strategyproof. 

A counterexample that shows that a coalition of two agents can manipulate Mechanism~\ref{minustwo-alg} is presented in Figure~\ref{manipulation-2}: $\mathcal{C}$ consists of 3 agents at $T,S$ each, one at each of $R,D$. Clearly $f(\mathcal{C})=D$. On the other hand, making one agent from each of $T,S$ misreport their location as $D$ yields a configuration $\mathcal{C}^{\prime}$ such that $f(\mathcal{C}^{\prime})=B$. Both agents benefit from misreporting. 

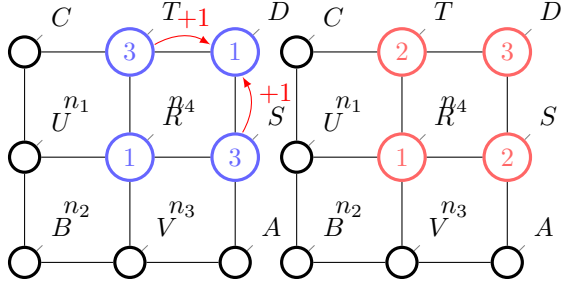
\begin{figure}
\begin{center}
\begin{minipage}{.25\textwidth} 
\begin{tikzpicture}[scale=0.7][
roundnode/.style={circle, draw=green!60, fill=green!5, very thick, minimum size=7mm},
squarednode/.style={rectangle, draw=red!60, fill=red!5, very thick, minimum size=5mm},
]
\draw (1,1) node{$n_2$};
\draw (3,1) node{$n_3$};
\draw (3,3) node{$n_4$};
\draw (1,3) node{$n_1$};
\node [circle,draw,minimum size=0.6cm,color=blue!60,very thick,pin={[pin distance=0.1cm]45:{$D$}}](d) at (4,4){1};
\node [circle,draw,minimum size=0.6cm,color=blue!60,very thick,pin={[pin distance=0.1cm]45:{$R$}}](r) at (2,2){1};
\node [circle,draw,minimum size=0.6cm,color=blue!60,very thick,pin={[pin distance=0.1cm]45:{$T$}}](t) at (2,4){3};
\node [circle,draw,minimum size=0.6cm,color=blue!60,very thick,pin={[pin distance=0.1cm]45:{$S$}}](s) at (4,2){3};
\node [circle,draw,minimum size=0.4cm,color=black,very thick,pin={[pin distance=0.1cm]45:{$U$}}](u) at (0,2){};
\node [circle,draw,minimum size=0.4cm,color=black,very thick,pin={[pin distance=0.1cm]45:{$V$}}](v) at (2,0){};
\node [circle,draw,minimum size=0.4cm,color=black,very thick,pin={[pin distance=0.1cm]45:{$A$}}](a) at (4,0){};
\node [circle,draw,minimum size=0.4cm,color=black,very thick,pin={[pin distance=0.1cm]45:{$B$}}](b) at (0,0){};
\node [circle,draw,minimum size=0.4cm,color=black,very thick,pin={[pin distance=0.1cm]45:{$C$}}](c) at (0,4){};
\draw[-] (d.south) -- (s.north);
\draw[-] (t.south) -- (r.north);
\draw[-] (c.south) -- (u.north);
\draw[-] (u.south) -- (b.north);
\draw[-] (r.south) -- (v.north);
\draw[-] (s.south) -- (a.north);
\draw[-] (c.east) -- (t.west);
\draw[-] (u.east) -- (r.west);
\draw[-] (b.east) -- (v.west);
\draw[-] (t.east) -- (d.west);
\draw[-] (r.east) -- (s.west);
\draw[-] (v.east) -- (a.west);
\draw[-latex,red] ([yshift=0.4em]t.east) to [bend left] node[pos=0.7,above] {+1} ([yshift=0.4em]d.west);
\draw[-latex,red] ([xshift=0.4em]s.north) to [bend right] node[pos=0.7,right] {+1} ([xshift=0.4em]d.south);
\end{tikzpicture}
\end{minipage}
~~~
\begin{minipage}{.25\textwidth}
\begin{tikzpicture}[scale=0.7][
roundnode/.style={circle, draw=green!60, fill=green!5, very thick, minimum size=7mm},
squarednode/.style={rectangle, draw=red!60, fill=red!5, very thick, minimum size=5mm},
]

\draw (1,1) node{$n_2$};
\draw (3,1) node{$n_3$};
\draw (3,3) node{$n_4$};
\draw (1,3) node{$n_1$};
\node [circle,draw,minimum size=0.6cm,color=red!60,very thick,pin={[pin distance=0.1cm]45:{$D$}}](d) at (4,4){3};
\node [circle,draw,minimum size=0.6cm,color=red!60,very thick,pin={[pin distance=0.1cm]45:{$R$}}](r) at (2,2){1};
\node [circle,draw,minimum size=0.6cm,color=red!60,very thick,pin={[pin distance=0.1cm]45:{$T$}}](t) at (2,4){2};
\node [circle,draw,minimum size=0.6cm,color=red!60,very thick,pin={[pin distance=0.1cm]45:{$S$}}](s) at (4,2){2};
\node [circle,draw,minimum size=0.4cm,color=black,very thick,pin={[pin distance=0.1cm]45:{$U$}}](u) at (0,2){};
\node [circle,draw,minimum size=0.4cm,color=black,very thick,pin={[pin distance=0.1cm]45:{$V$}}](v) at (2,0){};
\node [circle,draw,minimum size=0.4cm,color=black,very thick,pin={[pin distance=0.1cm]45:{$A$}}](a) at (4,0){};
\node [circle,draw,minimum size=0.4cm,color=black,very thick,pin={[pin distance=0.1cm]45:{$B$}}](b) at (0,0){};
\node [circle,draw,minimum size=0.4cm,color=black,very thick,pin={[pin distance=0.1cm]45:{$C$}}](c) at (0,4){};
\draw[-] (d.south) -- (s.north);
\draw[-] (t.south) -- (r.north);
\draw[-] (c.south) -- (u.north);
\draw[-] (u.south) -- (b.north);
\draw[-] (r.south) -- (v.north);
\draw[-] (s.south) -- (a.north);
\draw[-] (c.east) -- (t.west);
\draw[-] (u.east) -- (r.west);
\draw[-] (b.east) -- (v.west);
\draw[-] (t.east) -- (d.west);
\draw[-] (r.east) -- (s.west);
\draw[-] (v.east) -- (a.west);

\end{tikzpicture}
\end{minipage}

\end{center} 
\caption{Manipulation by a coalition of size 2.}
\label{manipulation-2} 
\end{figure} 

\end{itemize}

\subsection{Proof of Theorem~\ref{upper-circle}}

\noindent\textbf{Robustness:} Without loss of generality let us choose coordinates such that point $P_{\mathcal{C}}$ has coordinate 1/4. 

We first deal with \textbf{the case $\mathbf{f(\mathcal{C},P_{\mathcal{C}})=P_{\mathcal{C}}}$}, that is $n_P\leq n_Q+\lambda n$ or, equivalently, by substituting $n_Q=n-n_P$, $n_P\leq n(1+\lambda)/2$. 

Let $\mathcal{C}_1$ be the configuration consisting of $n_Q$ points at $P_{\mathcal{C}}-1/4$, together with $n_P$ points at $P_{\mathcal{C}}$ (Figure~\ref{cilcles_C1_C2}). The algorithm will output point $P_{\mathcal{C}}$. Hence $SW(\mathcal{C}_1,f(\mathcal{C},P_{\mathcal{C}}))= n_Q/4.$

\begin{figure}

\begin{minipage}{.25\textwidth}
\begin{tikzpicture}[scale=0.10,font=\sffamily,every path/.style={>=latex},every node/.style={draw,circle,inner sep=0.7pt,font=\small\sffamily}]
  
    \node [circle,draw,minimum size=2cm,color=red!60,very thick,pin=right:$P_C$,pin=left:Q,pin=above:{1, 0}](a){};
    \node[draw,circle,minimum size=4pt,fill,pin={260:$n_Q$}] at (0,10cm) {};
    \node[draw,circle,minimum size=4pt,fill,pin={280:$n_P$}] at (10cm,0) {};
      
\end{tikzpicture}
\end{minipage}
~~~
\begin{minipage}{.25\textwidth}
\begin{tikzpicture}[scale=0.10,font=\sffamily,every path/.style={>=latex},every node/.style={draw,circle,inner sep=0.7pt,font=\small\sffamily}]
  
    \node [circle,draw,minimum size=2cm,color=red!60,very thick,pin=right:$P_C$,pin=left:Q,pin=above:{1, 0}](a){};
    \node[draw,circle,minimum size=4pt,fill,pin={260:$n_Q$}] at (-10cm,0cm) {};
    \node[draw,circle,minimum size=4pt,fill,pin={80:$n_P$}] at (0,-10cm) {};
      
\end{tikzpicture}
\end{minipage}

\caption{(a). Configuration $\mathcal{C}_1$ (b). Configuration $\mathcal{C}_2$. }
\label{cilcles_C1_C2}
\end{figure}
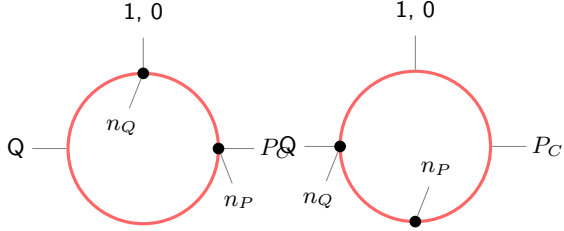

Let us find $x\in [0,1] $ such that $SW(\mathcal{C}_1,x)$ is maximized. On each of the segments $[0,1/4],[0,1/4],[1/2,3/4],[3/4,1]$ the objective function is a linear function in $x$ (generally a different one from segment to segment).  Hence the maximum is reached at one of the points $x=0,x=1/4,x=1/2,x=3/4$. These values are, respectively, $n_{P}/4,n_{Q}/4,n_{P}/2+n_{Q}/4,n_{P}/4+n_{Q}/2$. We infer that $OPT(\mathcal{C}_1)=\frac{n+max(n_P,n_Q)}{4}$, reached for $y^{*}=Q=3/4$ (if $n_P\geq n_Q$) and $y^{*}=1/2$ (if $n_P<n_Q$). If $n_P\leq n_Q$ then the fraction $\frac{OPT(\mathcal{C}_1)}{SW(\mathcal{C}_1,f(\mathcal{C},P_{\mathcal{C}}))}$ is at most 3. Otherwise it is  $\frac{n_Q+2n_P}{n_Q}$. Hence 
\begin{align}
\frac{OPT(\mathcal{C}_1)}{SW(\mathcal{C}_1,f(\mathcal{C},P_{\mathcal{C}}))}\leq max(3,1+2\frac{1+\lambda}{1-\lambda})=
\frac{3+\lambda}{1-\lambda}
\end{align} 
\textbf{$\mathbf{\Delta \geq 0}$:} $\Delta = F(\mathcal{C},P_{\mathcal{C}})-SW(\mathcal{C}_1,f(\mathcal{C},P_{\mathcal{C}}))- SW(\mathcal{C},y^{*})+SW(\mathcal{C}_1,y^{*})$ 
\begin{align*}
+SW(\mathcal{C}_1,y^{*})= \sum_{i=1}^{n} d(x_i,P_{\mathcal{C}}) - \frac{n_Q}{4} -\sum_{i=1}^{n} d(x_i,Q) + \frac{n_Q+2n_P}{4} =   \\ \frac{n_P}{2} + \sum_{i=1}^{n} (d(x_i,\widehat{P}) - d(x_i,\widehat{Q})) =  \sum_{x_i\in [1/2,1]}(d(x_i,P_{\mathcal{C}}) - d(x_i,Q))\\ + \sum_{x_i\in [0,1/2]} (d(x_i,P_{\mathcal{C}}) - d(x_i,Q))+  n_P/2. 
\end{align*}
The first sum consists of non-negative terms only, since for $z\in [1/2,1]$, $d(z,P_{\mathcal{C}})\geq d(z,Q)$. For the remaining terms, when $z\in [0,1/2]$ we claim that $d(z,P_{\mathcal{C}})-d(z,Q)\geq -1/2$. This is obvious since $d(z,P_{\mathcal{C}})\geq 0$ and $d(z,Q)\leq 1/2$, with equality when $z=P_{\mathcal{C}}$. Since the number of points in $[0,1/2]$ is $n_{P}$, $\Delta \geq 0$ follows. 

The  \textbf{case $\mathbf{f(\mathcal{C},P_{\mathcal{C}})=Q}$}, that is $n_P> n_Q+\lambda n$ or, equivalently, $n_P (1-\lambda)> n_Q(1+\lambda)$, is similar: let $\mathcal{C}_2$ be the configuration consisting of $n_P$ points at $P_{\mathcal{C}}+1/4$, together with $n_Q$ points at $Q$. The algorithm will output point $Q$, with social welfare $SW(\mathcal{C}_2,f(\mathcal{C},P_{\mathcal{C}}))=n_P/4.$ 
A similar analysis to that of the first case yields $OPT(\mathcal{C}_2)= (n+max(n_P,n_Q))/4=n+n_P=(2n_P+n_Q)/4$. 
Hence 
\begin{align}
\frac{OPT(\mathcal{C}_2)}{SW(\mathcal{C}_2,f(\mathcal{C},P_{\mathcal{C}}))}=2+\frac{n_Q}{n_P}< 2+\frac{1-\lambda}{1+\lambda}=\frac{3+\lambda}{1+\lambda} \leq \frac{3+\lambda}{1-\lambda}
\end{align}
Proving $\Delta \geq 0$ follows a similar pattern to that of the first case. 

\noindent{\textbf{Consistency:}} Without loss of generality assume that $P_{\mathcal{C}}=1/4$. If $f(\mathcal{C},P_{\mathcal{C}})=P_{\mathcal{C}}$ then the consistency ratio is one. 
Assume thus that $f(\mathcal{C},P_{\mathcal{C}})=Q$ but $OPT(\mathcal{C})$ is reached at $P_{\mathcal{C}}$. By the definition of the algorithm, the only possible case is $n_P>n_Q+\lambda n$, i.e. $\frac{n_Q}{n_P}<\frac{1-\lambda}{1+\lambda}$. 
Let $\mathcal{C}_1$ be the configuration consisting of the $n_P$ points of $\mathcal{C}$, together with $n_Q$ points at $Q$. Since for every $y$ we have $d(P,y)+d(Q,y)=1/2$, it follows that the optimum social welfare of $\mathcal{C}_1$ is still reached at $P_{\mathcal{C}}$, and is equal to $\Sigma+n_{Q}/2$, where $\Sigma:=\sum_{x_i\in (0,1/2)} |x_i-1/4|$. On the other hand $SW(\mathcal{C}_1,P_{\mathcal{C}})=\sum_{x_{i}\in (0,1/2)} 1/2-|x_i-1/4|= n_{P}/2-\Sigma$. Since $\Sigma \leq n_P/4$, the approximation ratio is
\[
\frac{OPT(\mathcal{C}_1)}{F(\mathcal{C}_1,P_{\mathcal{C}})}=\frac{\Sigma + n_Q/2}{n_P/2-\Sigma}\leq \frac{\frac{n_P}{4}+\frac{n_Q}{2}}{\frac{n_P}{2}-\frac{n_{P}}{4}}=1+\frac{2n_Q}{n_P}\leq 1+2\frac{1-\lambda}{1+\lambda}= \frac{3-\lambda}{1+\lambda}. 
\]
But $OPT(\mathcal{C})\leq OPT(\mathcal{C}_1)$ and $F(\mathcal{C}_1,P_{\mathcal{C}})\geq F(\mathcal{C},P_{\mathcal{C}})$. The consistency upperbound follows. 

\noindent{\textbf{Group strategyproofness:}} 
Assume first that $n_P\leq n_Q+\lambda n$, i.e. Algorithm~\ref{minustwo-alg} outputs point $P_{\mathcal{C}}$. Let $S\subset [n]$ be a set of agents. If $S$ contained a point $W$ such that $d(W,P_{\mathcal{C}})\geq d(W,Q)$ then $W$ cannot increase its utility by misreporting location. So all points in $S$ 
 are farther to $Q$ than to $P$. This means that all points in $S$ are counted into $n_P$ and by misreporting their location they will lead to point $P_{\mathcal{C}}$ being chosen as well. 
The case $n_P> n_Q+\lambda n$ is similar. 

\subsection{Proof of Theorem \ref{upper-trees}} 

First, a well-known problem in algorithms, a method to find diameters in trees, implies the fact that the algorithm is correct: if $a$ is the farthest from $P_\mathcal{C}$ then $P$ is a peripheral point: the farthest point $b$ from $a$ induces a diameter $ab$  in $T^{\prime}$. Moreover, by the definition of points $a,b$, we have $d(P_{\mathcal{C}},b)\leq d(P_{\mathcal{C}},a)$. 

\noindent \textbf{Group strategyproofness: } This is similar to the corresponding proof in \cite{mei2018mechanism}, and is based on the fact that points in $T_a$ are closer to $a$ than to $b$, while points in $T_b$ are closer to $b$ than to $a$. 

\noindent \textbf{Robustness: }  We first consider the case $n_1+\lambda n> n_2$, that is the algorithm outputs point $b$. 
Let $y^{*}$ be the optimal location for configuration $\mathcal{C}$ and $u^{*}$ the projection of $y^{*}$ on $[a,b]$.  Consider configuration $\mathcal{C}_1$ in which there are $n_1$ points at $m_{a,b}$ and $n_2$ points at $b$. $SW(\mathcal{C}_1,y^*)=n_1\cdot d(m_{a,b},y^{*}) + n_2\cdot d(y^{*},b)\leq n_1\cdot 
\frac{d(a,b)}{2}+n_{2}d(a,b)< n_{1}d(a,b)[\frac{1}{2}+\frac{1+\lambda}{1-\lambda}]=n_1\cdot d(a,b)\cdot \frac{3+\lambda}{2(1-\lambda)}$. On the other hand $F(\mathcal{C}_1,P_C)=n_1\cdot d(m_{a,b},b)+n_2\cdot d(b,b)=n_1\frac{d(a,b)}{2}$. So 
\[
\frac{SW(\mathcal{C}_1,y^{*})}{F(\mathcal{C}_1,P_C)}<\frac{3+\lambda}{1-\lambda}
\]
\textbf{$\Delta \geq 0$:} $\Delta = F(\mathcal{C},P_{\mathcal{C}})-SW(\mathcal{C}_1,f(\mathcal{C},P_{\mathcal{C}})) - OPT(\mathcal{C})+SW(\mathcal{C}_{1},y^{*})=$
  \begin{align*}
= \sum_{x_i\in \mathcal{C}} d(x_i,b)- n_1 d(m_{a,b},b) - \sum_{x_i\in \mathcal{C}} d(x_i,y^{*})  +  n_1 d(m_{a,b},y^*)\\ + n_2 d(b,y^{*})=   \sum_{x_i\in T_a} (d(x_i,b)-d(m_{a,b},b) - d(x_i,w^{*}) +d(m_{a,b},w^{*})) \\ + \sum_{x_i\in T_b} (d(x_i,b)-d(x_i,w^{*}) + d(b,w^{*}))  =  \sum_{x_i\in T_a} (d(m_{a,b},w^{*})\\ +  d(m_{a,b},x_i)- 
d(x_i,w^{*})) + \sum_{x_i\in T_b} (d(x_i,b)+ d(b,w^{*})
 -d(x_i,w^{*}) \geq 0 \end{align*} 
 by triangle inequalities. 
 
 Consider now the case $n_1+\lambda n\leq n_2$, that is the algorithm chooses the point $a$. Consider the configuration $\mathcal{C}_2$ in which there are $n_1$ points at $a$ and $n_2$ points at $m_{a,b}$.  $SW(\mathcal{C}_2,y^*)=n_1\cdot d(a,y^{*}) + n_2\cdot d(y^{*},m_{a,b})\leq n_1\cdot 
d(a,b)+n_{2}d(a,b)/2 \leq n_{2}d(a,b)[\frac{1}{2}+\frac{1-\lambda}{1+\lambda}]=n_1\cdot d(a,b)\cdot \frac{3-\lambda}{2(1+\lambda)}$. On the other hand $F(\mathcal{C}_2,P_C)=n_1\cdot d(a,a)+n_2\cdot d(m_{a,b},a)=n_2\frac{d(a,b)}{2}$. So 
\begin{equation} 
\frac{SW(\mathcal{C}_2,y^{*})}{F(\mathcal{C}_2,P_C)}\leq \frac{3-\lambda}{1+\lambda}\leq \frac{3+\lambda}{1-\lambda}
\label{foo1} 
\end{equation} 
\textbf{$\Delta \geq 0$:} $\Delta = F(\mathcal{C},P_{\mathcal{C}})-SW(\mathcal{C}_1,f(\mathcal{C},P_{\mathcal{C}})) - OPT(\mathcal{C})+SW(\mathcal{C}_{1},y^{*})$
  \begin{align*}
= \sum_{x_i\in \mathcal{C}} d(x_i,a)- n_2 d(m_{a,b},a) - \sum_{x_i\in \mathcal{C}} d(x_i,y^{*})  +  n_1 d(a,y^*)+ \\ + n_2 d(m_{a,b},y^{*})=   \sum_{x_i\in T_a} (d(x_i,a)- d(x_i,w^{*}) +d(a,w^{*})) \\  + \sum_{x_i\in T_b} (d(x_i,a)-d(m_{a,b},a) - d(x_i,w^{*}) + d(m_{a,b},w^{*}))  =  \\ = \sum_{x_i\in T_a} d(a,w^{*}) +  d(a,x_i)- 
d(x_i,w^{*}) +\\ +  \sum_{x_i\in T_b} (d(x_i,m_{a,b}) + d(m_{a,b},w^{*})
 -d(x_i,w^{*}) \geq 0 \end{align*} 
 by triangle inequalities. 
  
\noindent\textbf{Consistency: }  Assume that $P_{\mathcal{C}}=y^{*}$ is a peripheral point. There are two cases: 

\noindent\textbf{Case 1: $\mathbf{n_1+\lambda n\leq n_2}$.} That is the algorithm outputs point $a$. 
We actually obtained the desired bound in the course of proving robustness (first inequality in equation~(\ref{foo1})). 

\noindent\textbf{Case 1: $\mathbf{n_1+\lambda n> n_2}$.} That is the algorithm outputs point $b$. But by the definition of the algorithm (the fact that we take $b=P_{\mathcal{C}}$ if possible) it follows that $P_{\mathcal{C}}=y^{*}=b$. The approximation ratio is 1. 

\section{Proof of Theorem~\ref{thm:lb}}
\label{sec:lb} 

We will use characterization of strategyproof mechanisms from \cite{han2012moneyless, ibara2012characterizing}. They prove that strategyproof mechanisms have the following form: there exist $0\leq r<s\leq 1$ and $0\leq k\leq n$ s.t., if $v=\frac{r+s}{2}$ then 
\[
T_{k}^{r,s}(s_1,s_2,\ldots s_n)=\left\{ \begin{array}{ll}
r, & \mbox{ if }|\{i: s_{i}\leq v\}|\leq k, \\
s, & \mbox{ otherwise.} \\
\end{array}
\right. 
\]
In our cases clearly $r=0,s=1$, hence $v=1/2$. 
We will compute next the approximation ratios for the threshold mechanism $T_k^{0,1}$ under each of the hypotheses $y^{*}=0$, $y^{*}=1$. 

Clearly $OPT(\mathcal{C})=max(\sum_{i} x_{i}, \sum_{i=1}^{n} (1-x_{i}))$. To find the approximation ratios of $T_k^{0,1}$ we will look for configurations $\mathcal{C}$ for which the optimum is realized at $y^{*}=0$ while $T_{k}(\mathcal{C})=1$ (or viceversa).  

Indeed, to realize the optimum at $y^{*}=0$ while $T_{k}(\mathcal{C})=1$ we need to maximize $\sum_{i} x_i$ while making sure that at least $k+1$ points of $\mathcal{C}$ are in $[0,1/2]$. This is achieved by taking $\mathcal{C}_{1}$ to be the configuration consisting of $k+1$ points at 1/2 and $n-k-1$ points at 1. 
It can be verified that the optimum for $\mathcal{C}_1$ is indeed realized at $y^{*}=0$, since $\sum_{i} x_{i}=k/2+n-k=n-k/2$, while $\sum_{i} (1-x_{i})=k/2$. So 
\[
\frac{OPT(\mathcal{C}_1)}{T_{k}(\mathcal{C}_1)}= \frac{n-k/2}{k/2}=\frac{2n}{k}-1.
\]
On the other hand, to realize the optimum at $Y^{*}=1$ while $T_{k}(\mathcal{C})=0$ we need to minimize $\sum_{i} x_i$ while making sure that at most $k$ points of $\mathcal{C}$ are in $[0,1/2]$. Let $\mathcal{C}_{2}$ the configuration consisting of $k$ points at 0 and $n-k$ points at $1/2+\epsilon$. It can be verified that for small enough $\epsilon >0$ the optimum for $\mathcal{C}_2$ is indeed realized at $P=1$ since $\sum_{i} x_{i}=(n-k)(1/2+\epsilon)\approx \frac{n-k}{2}$, while $\sum_{i} (1-x_{i})=k+(n-k)(1/2-\epsilon)\approx (n-k)/2+k$. 
\[
\frac{OPT(\mathcal{C}_2)}{T_{k}(\mathcal{C}_2)}= \frac{(n+k)/2-\epsilon(n-k)}{(n-k)/2+(n-k)\epsilon}\leq \frac{n+k}{n-k}=\frac{2n}{n-k}-1. 
\]
In conclusion, for $k\leq n/2$ $T_k$ has approximation ratio $\frac{2n}{k}-1$, while for $k>n/2$ $T_k$ has approximation ratio $\frac{2n}{n-k}-1$. 

Consider now a mechanism with predictions $f:X^{n}\times \{0,1\}\rightarrow \{0,1\}$ and its restrictions $f_0,f_1$ to each of the cases $P_{\mathcal{C}}=0$, $P_{\mathcal{C}}=1$. Mechanisms $f_0,f_1$ are strategyproof. Since $f$ is $(1+c)$-consistent, for every $\mathcal{C}$ such that $OPT(\mathcal{C})=0$, $\frac{OPT(\mathcal{C})}{SW(\mathcal{C},f_0(\mathcal{C}))}\leq 1+c$, and similarly, for every $\mathcal{C}$ s.t. $OPT(\mathcal{C})=1$, $\frac{OPT(\mathcal{C})}{SW(\mathcal{C},f_1(\mathcal{C}))}\leq 1+c$.

By the previous discussion it follows that there exist numbers $k_0,k_1$ such that $f_0\equiv T_{k_0}$ and $f_1\equiv T_{k_1}$. Furthermore, since $f$ is $(1+c)$-consistent, we infer that $\frac{2n}{k_0}-1\leq 1+c$ or, equivalently, $k_0\geq \frac{2n}{2+c}$. Similarly, $\frac{2n}{n-k_1}-1\leq 1+c$, hence $k_1\leq n-\frac{2n}{2+c}=\frac{2cn}{2+c}$. 

Now let's consider the case of wrong predictions: suppose $P_{\mathcal{C}}=0$ but, in fact, $y^{*}=1$. 
This means that $f$ acts on $\mathcal{C}$ as $T_{k_0}$ would do. Consider again configuration $\mathcal{C}_2$ consisting of $k_0$ points at $0$, the rest at $1/2+\epsilon$. Clearly the optimum of this configuration is obtained at $y^{*}=1$. On the other hand $
\frac{OPT(\mathcal{C}_2)}{F(\mathcal{C}_2,P_{C_2})}= \frac{(n+k_0)/2-\epsilon(n-k_0)}{(n-k_0)/2+(n-k_0)\epsilon}$. Since $k_0\geq \frac{2n}{2+c}$, $
\frac{OPT(\mathcal{C}_2)}{F(\mathcal{C}_2,P_{C_2})}\geq \frac{\frac{n}{2}(1-\epsilon)+\frac{2n}{2+c}(\frac{1}{2}+\epsilon)}{\frac{n}{2}(1+2\epsilon)-\frac{2n}{2+c}(1/2+\epsilon)}=\frac{\frac{1}{2}+\frac{1}{(2+c)}+\epsilon\frac{2-c}{2(2+c)}}{\frac{1}{2}-\frac{1}{2+c}+\epsilon\frac{c}{2+c}}$. 
The limit (as $\epsilon \rightarrow 0$) of the right-hand side is $
\frac{\frac{1}{2}+\frac{1}{(2+c)}}{\frac{1}{2}-\frac{1}{2+c}}=\frac{4+c}{c}=1+\frac{4}{c}$. 
So for any $\delta >0$ we can find $\epsilon > 0$ such that configuration $C_2$ witnesses the fact that the mechanism $f$ is \textbf{not} $(1+\frac{4}{c}+\delta)$-robust. 

\section{Proof of Theorem~\ref{upper-dual}} 

\noindent\textbf{$\gamma$-group strategyproofness:} The proof is very similar to the corresponding proof in \cite{zou2015facility} (and our own arguments for Theorem~\ref{upper-1-2}), and is omitted from this extended abstract. 
We only highlight the fact that (because the intervals we consider are $[0,\frac{1-\lambda}{2}),[1-\lambda/2,1]$, which have unequal lengths), for $\lambda >0$ the mechanism cannot be group-strategyproof, otherwise it would contradict the characterization \cite{han2012moneyless} of strategyproof mechanisms for agents of type 0. 

\noindent\textbf{Robustness: }  The result will follow from the following lemmas: 
\begin{lemma} 
Let $\mathcal{C}$ be a configuration with all agents of type 1. Then 
\begin{equation} 
\frac{OPT(\mathcal{C})}{f(\mathcal{C},P_{\mathcal{C}})}\leq \frac{3+\lambda}{1-\lambda}
\end{equation} 
\end{lemma}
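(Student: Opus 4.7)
The plan is to mirror the template of Section~\ref{sec:ub}. Without loss of generality assume $P_{\mathcal{C}}=1$. Since all agents are of type $1$, the transformed locations satisfy $x_i^{*}=1-x_i$, so in Algorithm~\ref{alephzero-alg} $n_1$ is the number of agents with $x_i\ge (1+\lambda)/2$ and $n_2 = n - n_1$. The dual social welfare specialises to $SW(\mathcal{C},Y) = n - \sum_{i=1}^{n}|Y-x_i|$, so $OPT(\mathcal{C}) = n - \sum_i |y^{*} - x_i|$ with $y^{*}$ a median of $x_1, \dots, x_n$. The analysis then splits on the two possible outputs of the algorithm.

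In the case $n_1 > n_2$ (the algorithm returns $1$), I would take $\mathcal{C}_1$ to consist of $n_1$ type-$1$ agents at $(1+\lambda)/2$ together with $n_2$ at $0$. A short calculation gives $F(\mathcal{C}_1,1) = n_1(1+\lambda)/2 \le F(\mathcal{C},1)$ and, because the median of $\mathcal{C}_1$ lies at $(1+\lambda)/2$, also $OPT(\mathcal{C}_1) = n - n_2(1+\lambda)/2$. Setting $\beta := n_2/n_1 < 1$, the ratio $OPT(\mathcal{C}_1)/F(\mathcal{C}_1,1)$ simplifies to $\frac{2 + \beta(1-\lambda)}{1+\lambda} \le \frac{3-\lambda}{1+\lambda} \le \frac{3+\lambda}{1-\lambda}$.

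In the case $n_1 \le n_2$ (the algorithm returns $0$), I would take $\mathcal{C}_1$ to consist of $n_1$ agents at $1$ and $n_2$ at $(1+\lambda)/2$. Then $F(\mathcal{C}_1,0) = n_2(1-\lambda)/2 \le F(\mathcal{C},0)$, and the median of $\mathcal{C}_1$ again lies at $(1+\lambda)/2$, giving $OPT(\mathcal{C}_1) = n - n_1(1-\lambda)/2$. The resulting ratio equals $\frac{2}{1-\lambda} + \frac{n_1}{n_2}\cdot\frac{1+\lambda}{1-\lambda}$, which is maximised at $n_1 = n_2$ and produces exactly $\frac{3+\lambda}{1-\lambda}$.

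To transfer each bound from $\mathcal{C}_1$ to $\mathcal{C}$ I must prove the compensation inequality $\Delta \ge 0$, where $\Delta = F(\mathcal{C},P_{\mathcal{C}}) - SW(\mathcal{C}_1,f(\mathcal{C},P_{\mathcal{C}})) - OPT(\mathcal{C}) + SW(\mathcal{C}_1,y^{*})$. I would split the sum over agents into $S_L := \{i : x_i \ge (1+\lambda)/2\}$ and $S_S := [n] \setminus S_L$. In Case~1 the contribution of each $i \in S_S$ reduces to $x_i + |y^{*} - x_i| - y^{*} \ge 0$ by the triangle inequality, while for $i \in S_L$ it reduces to $x_i + |y^{*} - x_i| - (1+\lambda)/2 - |y^{*} - (1+\lambda)/2|$, non-negative because $b \mapsto b + |y^{*} - b|$ is non-decreasing for $b \ge 0$. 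Case~2 is symmetric: for $i \in S_L$ the triangle inequality yields $(1 - x_i) + |y^{*} - x_i| \ge |y^{*} - 1|$, and for $i \in S_S$ one uses that $b \mapsto (1-b) + |y^{*} - b|$ is non-increasing in $b$ on $[0,(1+\lambda)/2]$.

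The main conceptual obstacle, absent from the pure obnoxious setting, is that for type-$1$ agents the optimum $y^{*}$ is a median of the reported locations rather than an endpoint of $[0,1]$, so endpoint-only bounds on $OPT$ are too weak by a constant factor. The remedy is to place the ``intermediate'' mass of $\mathcal{C}_1$ at the algorithmic threshold $(1+\lambda)/2$ rather than only at $0$ or $1$; this pins the median of $\mathcal{C}_1$ exactly at $(1+\lambda)/2$, lets $OPT(\mathcal{C}_1)$ be computed in closed form, and makes the tight ratio $\frac{3+\lambda}{1-\lambda}$ emerge at $n_1 = n_2$ in Case~2.
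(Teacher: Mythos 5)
Your proof is correct, but it follows a genuinely different route from the paper's. The paper argues directly on $\mathcal{C}$: it uses the characterization of $y^{*}$ as a median of $x_1\leq\dots\leq x_n$, pairs agent $r$ with agent $n+1-r$, and bounds the per-pair ratio $\frac{u_r(y^{*})+u_{n+1-r}(y^{*})}{u_r(Z_{\mathcal{C}})+u_{n+1-r}(Z_{\mathcal{C}})}$ by $\frac{1+x_r}{1-x_r}\leq\frac{3+\lambda}{1-\lambda}$, using $x_r\leq y^{*}\leq x_{n+1-r}$ and the threshold bound $x_k\leq\frac{1+\lambda}{2}$, then sums over pairs (with a separate parity case for odd $n$). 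You instead instantiate the Section~\ref{sec:ub} template: you build an extremal configuration $\mathcal{C}_1$ with the intermediate mass pinned at the algorithmic threshold $\frac{1+\lambda}{2}$ (rather than at an endpoint, which is the right adaptation since the optimum for type-1 agents is a median, not an endpoint), verify $F(\mathcal{C},P_{\mathcal{C}})\geq SW(\mathcal{C}_1,f(\mathcal{C},P_{\mathcal{C}}))$ and the compensation inequality $\Delta\geq 0$ via the monotonicity of $b\mapsto b+|y^{*}-b|$ and $b\mapsto(1-b)+|y^{*}-b|$, and compute $OPT(\mathcal{C}_1)$ in closed form; I checked the arithmetic in both branches and the $\Delta$ decomposition, and everything goes through (note that the worst case correctly lands in the branch where the algorithm overrides the prediction, giving exactly $\frac{3+\lambda}{1-\lambda}$ at $n_1=n_2$, while the other branch yields the sharper $\frac{3-\lambda}{1+\lambda}$). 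The paper's pairing argument is shorter and avoids the $\Delta$ bookkeeping but splits on the parity of $n$ and leaves one case "similar"; your version is uniform with the other upper-bound proofs in the paper, avoids the parity split, and makes the extremal configuration explicit.
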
 
\begin{proof} 
Without loss of generality assume that $P_{\mathcal{C}}=1$ (if this is not the case, simply switch labels 1 and 0). 
Let $y^{*}$ be the point realizing the optimum social welfare for $\mathcal{C}$. $y^{*}$ can be characterized as the point that maximizes $\sum_{i=1}^{n} (1-d(x_i,y^{*}))$ (that is, it minimizes 
$\sum_{i=1}^{n} d(x_i,y^{*})$. It is well-known (e.g. \cite{procaccia2013approximate}) that $y^{*}$ is the median point of configuration $\mathcal{C}$ (any point between the two median points, if $n$ is even). That is: let $\mathcal{C}=x_1\leq x_2\leq \ldots \leq x_n$. If $n=2k+1$ then $y^{*}=x_{k+1}$. Otherwise $y^{*}\in [x_k,x_{k+1}]$ and we will take $y^{*}=\frac{x_k+x_{k+1}}{2}$. 

Assume first  that $\mathbf{n_1\leq n_2}$ (i.e. Algorithm~\ref{minusone-alg} outputs $P_{\mathcal{C}}=1$). Suppose first that $n=2k$. Then, by condition $\mathbf{n_1\leq n_2}$ it follows that there are at least $k$ members of $\mathcal{C}$ in the interval $[0,\frac{1+\lambda}{2}]$. So $x_k\leq \frac{1+\lambda}{2}$. 
For $1\leq r\leq k$ we have: 
\begin{align*}
\frac{u_{r}(y^{*})+u_{n+1-r}(y^{*})}{u_{r}(f(\mathcal{C},P_{\mathcal{C}}))+u_{n+1-r}(f(\mathcal{C},P_{\mathcal{C}}))} = \frac{2-|x_r-y^{*}|-|x_{n+1-r}-y^{*}|}{2-x_r-x_{n+1-r}} \\ 
= \frac{2-|x_{n+1-r}-x_r|}{2-2x_r-|x_{n+1-r}-x_{r}|}\leq \frac{2-(1-x_r)}{2-2x_r-(1-x_r)}= \\ = \frac{1+x_r}{1-x_{r}}\leq  \frac{1+\frac{1+\lambda}{2}}{1-\frac{1+\lambda}{2}} = \frac{3+\lambda}{1-\lambda}
\end{align*} 
In the previous  inequalities we have used the fact that $x_{r}\leq y^{*}\leq x_{n+1-r}$, that 
$x_{n+1-r}-x_{r}\leq 1-x_r$, that $x_r\leq \frac{1+\lambda}{2}$. 
In conclusion 
\begin{align*} 
u_{r}(y^{*})+u_{n+1-r}(y^{*}) \leq  \frac{3+\lambda}{1-\lambda}\cdot (u_{r}(f(\mathcal{C},P_{\mathcal{C}}))+u_{n+1-r}(f(\mathcal{C},P_{\mathcal{C}})))
\end{align*} 
Summing up all these inequalities for $r=1,\ldots, k$ we get $OPT(\mathcal{C})\leq \frac{3+\lambda}{1-\lambda}\cdot F(\mathcal{C},P_{\mathcal{C}})$, which is what we wanted to prove. 
The proof is similar for the case $n=2k+1$. In this case we sum again inequalities for $i=1,\ldots, k$ and use the fact that $x_{k+1}=y^{*}$, so the middle term in the sum expressing $OPT(\mathcal{C})$ is 0. We omit giving complete details. 

The case $n_1>n_2$ is dealt with similarly. 
\end{proof} 

\begin{lemma} Let $\mathcal{C}_1$, $\mathcal{C}_{2}$ be two configurations with $n$ agents s.t. all agents in $\mathcal{C}_1$ are of type 1 and the  transformed configurations of $\mathcal{C}$, $\mathcal{C}_1$ are identical. Then for any  $P\in \{0,1\}$ $f(\mathcal{C},P)=f(\mathcal{C}_1,P)$, while $OPT(\mathcal{C}) \leq OPT(\mathcal{C}_1)$. Hence $app(\mathcal{C},f(\mathcal{C},P))\leq app(\mathcal{C}_1,f(\mathcal{C}_1,P))$. 
\label{lemma:two}
\end{lemma}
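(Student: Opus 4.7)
The plan is to split the argument into the three claims in the statement and handle them in the order $f(\mathcal{C},P)=f(\mathcal{C}_1,P)$, then $OPT(\mathcal{C})\le OPT(\mathcal{C}_1)$, and finally the approximation ratio inequality. The key technical observation I will rely on is that, although transformed and untransformed utilities may differ at arbitrary $Y\in[0,1]$, they agree at the endpoints $Y\in\{0,1\}$, which are the only values the mechanism ever outputs.

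\noindent\textbf{Step 1 (the mechanism depends only on transformed locations).} By direct inspection of Algorithm~\ref{alephzero-alg}, the counts $n_1$ and $n_2$ are computed purely from the transformed coordinates $x_i^{*}$ (not from $x_i$ or $y_i$ individually), and the branching rule uses only $(n_1,n_2)$ and $p_1$. Since $\mathcal{C}$ and $\mathcal{C}_1$ have identical transformed configurations by hypothesis, the algorithm produces the same output on both, proving $f(\mathcal{C},P)=f(\mathcal{C}_1,P)$ for every $P\in\{0,1\}$.

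\noindent\textbf{Step 2 (optimum on $\mathcal{C}_1$ dominates the one on $\mathcal{C}$).} Let $Y^{*}$ realize $OPT(\mathcal{C})$. I will use $Y^{*}$ as a witness in $\mathcal{C}_1$ and show $SW(\mathcal{C}_1,Y^{*})\ge SW(\mathcal{C},Y^{*})=OPT(\mathcal{C})$. Unpacking the identification: a type-$1$ agent in $\mathcal{C}$ sitting at $x_i$ is represented in $\mathcal{C}_1$ by a type-$1$ agent at the same location $x_i$, so its contribution $1-d(Y^{*},x_i)$ is unchanged; a type-$0$ agent in $\mathcal{C}$ at $x_j$ is represented in $\mathcal{C}_1$ by a type-$1$ agent at $1-x_j$, contributing $1-d(Y^{*},1-x_j)$ versus the original $d(Y^{*},x_j)$. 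Thus the whole difference reduces to proving
\[
d(Y^{*},x_j)+d(Y^{*},1-x_j)\le 1\qquad\text{for all }Y^{*},x_j\in[0,1].
\]
A short case analysis on the signs of $Y^{*}-x_j$ and $Y^{*}+x_j-1$ shows the left-hand side equals one of $1-2x_j$, $2Y^{*}-1$, $1-2Y^{*}$ or $2x_j-1$, all of which lie in $[-1,1]$, giving the bound. Summing over $j\in T_0$ yields $OPT(\mathcal{C})\le SW(\mathcal{C}_1,Y^{*})\le OPT(\mathcal{C}_1)$.

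\noindent\textbf{Step 3 (comparing approximation ratios).} Since Algorithm~\ref{alephzero-alg} only returns points of $\{0,1\}$, the denominator $SW(\cdot,f(\cdot,P))$ is always evaluated at $Y\in\{0,1\}$. The crucial observation is that at these two values the transformation preserves social welfare term-by-term: for $Y\in\{0,1\}$, a direct check (type-$1$ agent at $x_i$ contributes the same as itself; type-$0$ agent at $x_j$ in $\mathcal{C}$ contributes the same as a type-$1$ agent at $1-x_j$ in $\mathcal{C}_1$) gives $SW(\mathcal{C},Y)=SW(\mathcal{C}_1,Y)$. Combined with Step 1 this yields $SW(\mathcal{C},f(\mathcal{C},P))=SW(\mathcal{C}_1,f(\mathcal{C}_1,P))$, and dividing $OPT(\mathcal{C})\le OPT(\mathcal{C}_1)$ (Step~2) by this common positive value gives $app(\mathcal{C},f(\mathcal{C},P))\le app(\mathcal{C}_1,f(\mathcal{C}_1,P))$.

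\noindent\textbf{Expected main obstacle.} The only non-bookkeeping point is the inequality $|Y-x|+|1-Y-x|\le 1$ in Step~2, which plays the role of the Zou--Li Lemma~4 cited in the footnote of Section~\ref{sec:prelim}. Every other step is either definitional (inspection of the algorithm) or a straightforward endpoint computation, so the content of the lemma really boils down to that single one-line inequality combined with the correct bookkeeping between the two configurations.
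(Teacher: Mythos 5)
Your proof is correct and follows the same core route as the paper's: the mechanism's output depends only on the transformed profile (so $f(\mathcal{C},P)=f(\mathcal{C}_1,P)$), and the welfare comparison reduces to the inequality $d(Y,x)+d(Y,1-x)\le 1$, which is exactly Lemma~4 of Zou--Li invoked in the paper. The one place where you go beyond the paper's argument is Step~3, and it is a genuinely needed addition: the paper only establishes the one-sided pointwise bound $SW(\mathcal{C},Y)\le SW(\mathcal{C}_1,Y)$, which orders the numerators of the two approximation ratios correctly but orders the \emph{denominators} $SW(\cdot,f(\cdot,P))$ in the wrong direction for the ``Hence'' conclusion. Your observation that at the endpoints $Y\in\{0,1\}$ (the only possible outputs of Algorithm~\ref{alephzero-alg}) the transformation preserves each agent's contribution exactly, so that $SW(\mathcal{C},f(\mathcal{C},P))=SW(\mathcal{C}_1,f(\mathcal{C}_1,P))$, is precisely what closes this gap; the paper leaves it implicit. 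So: same key lemma and same bookkeeping, but your version is the more complete proof of the final claim.
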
 
\noindent\textbf{Proof. }The first relation follows directly from the definition of Algorithm~\ref{alephzero-alg}. As for the second, it is enough to prove that $SW(\mathcal{C},Y)\leq SW(\mathcal{C}_1,Y)$ for every $Y\in [0,1]$.  If $1\leq i\leq n$ and agent $i$ is of type $1$ in $\mathcal{C}$ (and in $\mathcal{C}_1$) then $u_i(\mathcal{C},Y)=u_{i}(\mathcal{C}_1,Y)$. Otherwise, if $i$ is of type 0 and location $x_i$ in $\mathcal{C}$ (but, of course, of type $1$ in $\mathcal{C}_1$) then its location in $\mathcal{C}_1$ is $x_{i}^{*}$ and $u_{i}(\mathcal{C},Y)=d(x_i,Y)$, while $u_{i}(\mathcal{C}_1,Y)=1-d(x_{i}^{*},Y)$. Since $d(x_i,Y)+d(x_i^{*},Y)\leq 1$ (\cite{zou2015facility}, Lemma 4) Lemma~\ref{lemma:two} follows. 

\noindent\textbf{Consistency: } We will use again Lemma~\ref{lemma:two}. To prove that the algorithm is $\frac{3-\lambda}{1+\lambda}$ consistent we only need to show that for all configurations $\mathcal{C}$ where all agents are of type 1, if $1=P_{C}=y^{*}$ then $
\frac{OPT(\mathcal{C})}{F(\mathcal{C},P_{C})}\leq \frac{3-\lambda}{1+\lambda}$. 
If the algorithm outputs point $P_{C}$ then the approximation ratio of the algorithm is 1. The remaining  case is when the algorithm outputs the value $0$.  This means that $n_1>n_2$. Furthermore, since $y^{*}=1$, we must have $x_i=1$ for all $i\geq \lceil n/2 \rceil$. 

Let $n=2k$. Just as in the case of proving robustness $x_k\leq \frac{1-\lambda}{2}$. We have 
$OPT(\mathcal{C})=(n-k)\cdot 1 +\sum_{i=1}^{k} x_i.$ On the other hand $F(\mathcal{C},P_{C})=(n-k)\cdot 1 + \sum_{i=1}^{k} 1-x_i=n+\sum_{i=1}^{k} x_i$. So $
\frac{OPT(\mathcal{C})}{F(\mathcal{C},P_{C})}\leq \frac{n-k+k\frac{1-\lambda}{2}}{n-k\frac{1-\lambda}{2}}= \frac{3-\lambda}{1+\lambda}$. 

If $n=2k+1$ then $OPT(\mathcal{C})=n-k-1 +\sum_{i=1}^{k} x_i,$ while $F(\mathcal{C},P_{C})=(n-k)\cdot 1 + \sum_{i=1}^{k} 1-x_i=n-\sum_{i=1}^{k} x_i$. So 
$
\frac{OPT(\mathcal{C})}{F(\mathcal{C},P_{C})}< \frac{k+k\frac{1-\lambda}{2}}{2k-k\frac{1-\lambda}{2}}= \frac{3-\lambda}{1+\lambda}$.

\section{Related Work} 
For lack of space we only give the briefest sketch of related work. 

As mentioned in the introduction, our work is naturally related to the emerging literature on algorithms with predictions, \cite{algo-predictions,purohit2018improving, algo-predictions-webpage}. In particular, the robustness/consistency model was made popular (in the context of the competitive analysis of  ski-rental) by \cite{purohit2018improving}, and further studied in many papers. 
The literature on mechanism design with predictions is, on the other hand, very recent and limited, 
only including, to our best knowledge, references \cite{xu2022mechanism,agrawal2022learning,gkatzelis2022improved,balkanski2022strategyproof}.

The facility location has a long history of investigation, both in a classical (optimization) sense \cite{morris1988facilities}, as well as in a strategic setting (e.g. \cite{procaccia2013approximate,meir2019strategyproof,aziz2020facility}). The classical approaches to obnoxious version of facility location have an equally long history, starting perhaps with \cite{zelinka1968medians}, as a topic in graph theory/operations research, and  including, e.g., \cite{church1978locating,ting1984linear,tamir1991obnoxious}. The strategic approach started in \cite{ye2015strategy} (journal version in \cite{mei2018mechanism}). Other directly relevant related references are \cite{han2012moneyless, ibara2012characterizing} that characterize stategyproof mechanisms in two overlapping but slightly settings, that both generalize obnoxious facility location, as well as \cite{oomine2016characterizing}, that characterize possible outputs of group strategyproof mechanisms. Further variations include, e.g., \cite{cheng2013obnoxious,feigenbaum2015strategyproof,oomine2017,fukui2019lambda,li2021obnoxious}. An interesting, alternative strategic approach based on voting, that studies a so-called \emph{anti-Condorcet point}, and would deserve further attention is given in \cite{labbe1990location}.

\label{sec:related} 
\section{Conclusions and Open Problems} 
\label{sec:conclusions} 

Our main contribution is showing that obnoxious facility location can be analyzed in the framework of mechanism design with predictions. A number of issues raised by our work remain open. The list of the most interesting ones includes, we believe, the following:
\begin{enumerate}[leftmargin=*]
\item In the 2-d case can we give a \textbf{group-strategyproof mechanism} displaying the optimum tradeoff ? 
\item What are the optimal tradeoffs between robustness and consistency in the case of $d$-dimensional hypercubes, $d\geq 3$ ? 
\item Can we give an algorithm for trees that overcomes the issue(s) discussed in Remark~\ref{rem1} ? 
\item What about randomized algorithms ? Such bounds exist in the case without predictions. We  attempted to give such bounds but failed, due to the complexity of resulting calculations. 
\end{enumerate} 

\bibliographystyle{unsrt}
\bibliography{/Users/gistrate/Dropbox/texmf/bibtex/bib/bibtheory}

\end{document}